\def\@abssec#1{\vspace{.05in}\footnotesize \parindent .2in
{\bf #1. }\ignorespaces}
\def\endproof{{\ \vbox{\hrule\hbox{%
   \vrule height1.3ex\hskip0.8ex\vrule}\hrule
  }}\par}
\newtheorem{theorem}{Theorem}[section]
\newtheorem{remark}[theorem]{Remark}
\newtheorem{hypothesis}[theorem]{Hypothesis}
\def \Rm {\mathbb R}
\def \Nm {\mathbb N}
\def \Cm {\mathbb C}
\def \Zm {\mathbb Z}
\def \Sm {\mathbb S}
\def \Tm {\mathbb T}
\newcommand{\eps}{\varepsilon}
\newcommand{\E}{\mathbb E}
\newcommand{\dsum}{\displaystyle\sum}
\newcommand{\dint}{\displaystyle\int}
\newcommand{\mH}{\mathcal H} 
\newcommand{\mK}{\mathcal K} \newcommand{\mL}{\mathcal L}
 \newcommand{\mN}{\mathcal N}
 \newcommand{\mT}{\mathcal T}
\newcommand{\mV}{\mathcal V}
\newcommand{\mn}{\mathfrak n}
\newcommand{\fH}{\mathfrak H}
\newcommand{\fa}{{\mathfrak a}}
\newcommand{\mh}{{\mathfrak h}}
\newcommand{\rH}{{\rm H}}
\newcommand{\tV}{{\mathtt V}}
\newcommand{\ind}{{\rm Index}}
\newcommand{\dt}{{\rm det}}
\newcommand{\cout}[1]{}
\newcommand{\sgn}[1]{\,{\rm sign}(#1)}
 \renewcommand{\arraystretch}{1.5}
\newcommand{\hplus}{h_{\tau}}
\newcommand{\hplusv}{h_{\tau v}}
\newcommand{\hminus}{h_{o}}
\newcommand{\hminusv}{h_{ov}}
\newcommand{\thplus}{{\tilde h_{\tau}}}
\newcommand{\thminus}{{\tilde h_{o}}}
\newcommand{\mhplus}{{\mh_{\tau}}}
\newcommand{\mhminus}{{\mh_{o}}}
\newcommand{\Mtop}{{M_{\tau}}}
\newcommand{\Ntop}{{N_\tau}}
\newcommand{\Mtriv}{{M_o}}
\newcommand{\Ntriv}{{N_o}}
\newcommand{\mplus}{m_\tau}
\newcommand{\mminus}{m_o}
\newcommand{\faplus}{{\fa_{\tau}}}
\newcommand{\faplusstar}{{\fa_\tau^*}}
\newcommand{\faminus}{{\fa_{o}}}
\newcommand{\faminusstar}{{\fa_o^*}}
\newcommand{\mnplus}{\mn_\tau}
\newcommand{\mnminus}{\mn_o}
\title{Topological protection of perturbed edge states}
\author{Guillaume Bal \thanks{Departments of Statistics and Mathematics and CCAM, University of Chicago, Chicago, IL 60637; guillaumebal@uchicago.edu}}
\begin{document}
 
\maketitle


\begin{abstract}
This paper proposes a quantitative description of the low energy edge states at the interface between two-dimensional topological insulators. They are modeled by continuous Hamiltonians as systems of Dirac equations that are amenable to a large class of random perturbations. We consider general as well as fermionic time reversal symmetric models. In the former case, Hamiltonians are classified using the index of a Fredholm operator. In the latter case, the classification involves a ${\mathbb Z}_2$ index. These indices dictate the number of topologically protected edge states.

A remarkable feature of topological insulators is the asymmetry (chirality) of the edge states, with more modes propagating, say, up than down. In some cases, backscattering off imperfections is prevented when no mode can carry signals backwards. This is a desirable feature from an engineering perspective, which raises the question of how back-scattering is protected topologically. A major motivation for the derivation of continuous models is to answer such a question.  

We quantify how backscattering is affected but not suppressed by the non-trivial topology by introducing a scattering problem along the edge and describing the effects of topology and randomness on the scattering matrix. Explicit macroscopic models are then obtained within the diffusion approximation of field propagation to show the following: the combination of topology and randomness results in un-hindered transport of the protected modes while all other modes (Anderson) localize.


%
%

\medskip

\noindent{\bf Keywords:}   Topological insulators, Edge states, Fredholm Operators, Index Theory,  Dirac Equations, ${\mathbb Z}_2$ index, Scattering theory, Diffusion approximation, Anderson localization.

\noindent{\bf MSC:}  47A53, 35Q41, 34L25, 60J70

\end{abstract}
 

\renewcommand{\thefootnote}{\fnsymbol{footnote}}
\renewcommand{\thefootnote}{\arabic{footnote}}

\renewcommand{\arraystretch}{1.1}





\section{Introduction}


The characterization of phases of materials by  topological invariants rather than by symmetries and their spontaneous breaking constitutes a very active field of research in condensed-matter physics. These phases display fundamental properties that are immune to continuous changes in the material parameters unless the topological invariant ceases to be defined. Examples of such properties in two dimensional materials are the quantum Hall effect and the quantum spin Hall effect, which display unusual transport properties of electronic edge states at the interface between insulators \cite{Bernevig1757,chiu2016classification,PhysRevB.76.045302,PhysRevLett.98.106803,PhysRevLett.61.2015,RevModPhys.82.3045,PhysRevLett.95.146802,RevModPhys.83.1057}. Similar effects have also been predicted and observed in many photonic and mechanical structures \cite{fang2012realizing,fleury2016floquet,Fleury2014,hafezi2011robust,hafezi2013imaging,PhysRevLett.100.013904,
khanikaev2013photonic,lu2014topological,nash2015topological,
PhysRevA.78.033834,rechtsman2013photonic,umucalilar2011artificial,
wang2009observation,PhysRevLett.100.013905}.

The objective of this paper is to derive a continuous partial differential model that allows for a quantitative analysis of the properties of such edge (or interface) states and how topology influences their behavior under perturbations by random fluctuations. 

The existing models for edge states of topologically non-trivial materials  are typically obtained as follows \cite{chiu2016classification,Fruchart2013779,RevModPhys.82.3045,RevModPhys.83.1057}. Materials obeying translational invariance are characterized by a Hamiltonian $H(k)$ that depends continuously on a wavenumber $k$ living in a Brillouin zone $\Tm_2$. Such Hamiltonians are gapped by means of a mass term, an order parameter $M$. For any energy $E$ in the band gap, i.e., not in the spectrum of $H(k)$ for any $k\in \Tm_2$, the material is therefore an insulator. The mass term $M$, which may be a scalar quantity or a more general object, describes the (tunable) topology of the material: for certain values of $M$, say $M_R$, the material is topologically trivial, while for other values of that order parameter, say $M_L$, the material is non-trivial.  Edge states appear at the interface between materials characterized by $M_R$ and by $M_L$. Assuming a smooth transition from one value to another, the mass term $M(x)$ passes through a value  where the topological invariant is not defined, the gap is closed, the material ceases to be an insulator, and (metallic) edge states are allowed. The nature of the edge states and their topological protection can often be related to the topology of the insulators described by $M_{R,L}$. This is the bulk-boundary correspondence \cite{chiu2016classification,fukui2012bulk,Graf2013,RevModPhys.82.3045,PhysRevB.83.125109,RevModPhys.83.1057}.

The above results are based on the analysis of Hermitian bundles (parametrized by the order parameter $M$) of appropriate eigen-spaces of the Hamiltonian over the Brillouin zone $\Tm_2$ (or more general compact phase spaces in a `semi-classical' approach \cite[Section III.B]{chiu2016classification}). As such, they require that the material be invariant with respect to discrete translations. This renders the analysis of random fluctuations that naturally exist at all scales and therefore break the translational invariance quite difficult. 

To obtain quantitative models for the influence of non-periodic random fluctuations on edge states, we have to leave the realm of continua described by a Brillouin zone.  More precisely, we need to leave the commutative setting of Fourier (or phase-space) multipliers on the Brillouin zone and the topology of Hermitian bundles, and consider instead the non-commutative setting of operators acting on the physical variables, where the topology is given by the indices of appropriate Fredholm operators. Such a successful framework was introduced in \cite{bellissard1986k,bellissard1994noncommutative} to model the quantum hall effect. It was extended to more general topological material, including the analysis of edge and interface states for discrete Hamiltonians, their perturbations by randomness, and the bulk-boundary correspondence in a large body of works in the mathematical literature; see \cite{Avila2013,bourne2016chern,doi:10.1063/1.4902377,1751-8121-44-11-113001,prodan2016bulk} and their numerous references.
This index approach is related to the notion of relative index of projections developed in \cite{PhysRevLett.65.2185,avron1994,AVRON1994220} to analyze the quantum Hall effect.

\medskip

The main objective of this work is to introduce continuous partial differential models, also of ``index'' type, to quantify the transport properties of edge states and assess how such properties are affected by topological constraints and random perturbations. In particular, we describe how the backscattering of edge modes is affected by topology but not always suppressed in the presence of randomness. 
For concreteness, we focus on an electronic application although the mathematical model, a system of Dirac equations, applies to the analysis of transport in topological photonics; see above references.

\medskip

{\bf Partial differential model.}
The framework we propose here is best motivated by a concrete example related to graphene. We refer the reader to \cite{RevModPhys.81.109,Fruchart2013779,RevModPhys.82.3045,prodan2016bulk,RevModPhys.83.1057} for the details and context. Unperturbed graphene is modeled by a Hamiltonian $H(k)$, $k=(k_x,k_y)$ in a Brillouin zone $\Tm_2$, a two-dimensional torus, describing bulk states at different sub-lattices ($A$ or $B$) and spin configurations (up and down). Two wavenumbers $\xi K$ for $\xi=\pm1$, the Dirac points, are special  in that  the conduction and valence bands meet exactly at these two points for an energy $E$ normalized to $0$. Valleys are then described by wavenumbers $k=\xi K+q$ in the vicinity of these two Dirac points. Focusing on one valley, say $\xi=1$, for a given spin, say up, and linearizing the Hamiltonian in the vicinity of the Dirac point gives a Dirac operator for a Fermi velocity $v$ written in the Fourier domain as 
\begin{displaymath}
  v (q_x\sigma_1 + q_y\sigma_2)= v \left(\begin{matrix}
   0&q_x-iq_y\\q_x+iq_y&0
\end{matrix}\right)
\end{displaymath}
and describing quantum states at the sub-lattices $A$ and $B$ (first and second components, respectively, of the spinor in $\Cm^2$ to which the Hamiltonian applies), where $(\sigma_1,\sigma_2,\sigma_3)$ are the standard Pauli matrices. The above linear dispersion relation, which is very common in the physical literature \cite{RevModPhys.81.109,Fruchart2013779,RevModPhys.82.3045,RevModPhys.83.1057}, is valid for low frequencies $v q$, or equivalently low energies. This is the regime considered in this paper. 

The above model is not gapped and graphene is metallic at energy $E=0$. One mechanism to gap graphene or graphene-like structures corresponds to an asymmetry between the sub-lattices $A$ and $B$, which results in the Hamiltonian for the valley $\xi=1$,
\begin{displaymath}
  H_1(q) = v (q_x\sigma_1+q_y\sigma_2 + m_1 \sigma_3).\end{displaymath}

Let us assume that the valley $\xi=-1$ is gapped as well but with a mechanism that results in a mass $m_2$ not necessarily equal to $m_1$ and a corresponding Hamiltonian $H_2(q)$. It may be shown \cite{Fruchart2013779,prodan2016bulk} quite generally, including for the celebrated Haldane model \cite{PhysRevLett.61.2015}, that the (first) Chern number $c_1=\frac12(\sgn{m_2}-\sgn{m_1})$ is a topological invariant for the Hamiltonian $H(k)$ defined  for $k$ over the whole Brillouin zone $\Tm_2$ and whose linearization in the vicinity of $\pm K$ is given by $H_1(q)\oplus H_2(q)$. The Chern number is physically relevant as it can be related by the Kubo formula to the quantum Hall conductivity, which can be observed experimentally \cite{RevModPhys.81.109,Fruchart2013779,RevModPhys.82.3045,prodan2016bulk,RevModPhys.83.1057}. Note that such a model breaks time reversal symmetry. 

Here and below, we define the direct sum of operators $h_j:\rH_j\to \tilde \rH_j$, $j=1,2$, as the operator $h_1\oplus h_2$ from $\rH_1\oplus \rH_2$ to $\tilde \rH_1\oplus\tilde \rH_2$ defined by $(h_1\oplus h_2) (u,v)=(h_1u,h_2v)$, i.e., formally the operator ${\rm Diag}(h_1,h_2)$.

A standard route to the derivation of edge states now assumes that $m_j=m_j(x)$ with $x$ a macroscopic spatial variable. Then $x>0$ and $x<0$ correspond to two materials in (possibly) different topological phases. Assume $m_j(x)$  continuous with, say,  $m_1(x)\to\pm m_\alpha>0$ changing signs as $x\to\pm\infty$ while $m_2(x)\to m_\beta>0$ does not change signs asymptotically in the same limits. Then, $c_1=c_1(x)=\frac12(\sgn{m_2(x)}-\sgn{m_1(x)})$ changes values as $x$ runs from $-\infty$ to $+\infty$ and hence must jump somewhere, say at $x=0$, where the material is metallic. An edge state localized in the vicinity of $x=0$ may then appear and propagate in the transverse direction $y$. This is confirmed by the classical analysis we will come back to in detail in section \ref{sec:scat} of a Hamiltonian of the form
\begin{equation}\label{eq:example}
  H(q_y) = v \big(\frac1i\partial_x\sigma_1+q_y\sigma_2+m_1(x)\sigma_3\big) \oplus v \big(\frac1i\partial_x\sigma_1+q_y\sigma_2+m_2(x)\sigma_3\big).
\end{equation} 
This Hamiltonian may be represented as a $4\times4$ matrix of operators and applies to spinors of the form $(\psi_{A,\xi=1},\psi_{B,\xi=1},\psi_{A,\xi=-1},\psi_{B,\xi=-1})^t$.
The bulk-edge correspondence provides a link between the topological bulk properties $c_1(\pm\infty)$ of the two materials at $x>0$ and $x<0$ and the number of edge modes $c_1(+\infty)-c_1(-\infty)$ concentrated in the vicinity of where the topological number $c_1(x)$ jumps.

Once written in the physical domain (where the multiplier $q_y$ is replaced by $\frac1i\partial_y$), we obtain an unperturbed partial differential model for the interface, a system of Dirac equations, which no longer requires the assumption of translational invariance.


\medskip

{\bf Edge models.} Our aim is to analyze the edge states in the vicinity of $x=0$. Since both domains $\pm x>0$  are insulators, we now introduce operators that neglect (prevent) bulk propagation and the resulting continuous spectra for energies not lying in the band gap. This greatly simplifies our analysis. The work \cite{B-BulkInterface-2018} shows that such simplifications can be avoided at the price of significant functional technicalities we do not consider here.

 Let us focus on  the mass term $m(x)$ that changes signs. We wish to obtain a function that faithfully describes the change of topology (i.e., changes sign in the vicinity of $0$) while confining the system to the vicinity of $x=0$. The simplest example is $\mplus(x)=\lambda x$ for some $\lambda>0$. A large $\lambda$ corresponds to a sharp transition from one topology to the other. The unperturbed edge states  are now modeled by Hamiltonians of the form (normalizing the Fermi velocity $v=1$ for the rest of the paper):
\begin{displaymath}
   \thplus = \frac1i \partial_x \sigma_1 + \frac1i \partial_y \sigma_2 + \mplus(x) \sigma_3.
\end{displaymath}
The mass term that does not change signs may be treated similarly, with the main difference that $m=\mminus(x)$ has a fixed sign at $\pm\infty$. Our choice of a model for $\mminus(x)$ will be a smooth function of $x$ with a behavior at $\pm\infty$ of the form $\lambda|x|$ so that $\sgn{\mminus(x)}$ is constant.  The Hamiltonian for the Haldane model corresponding to a transition of $c_1(x)$ from a trivial phase $c_1(x)=0$ to a non-trivial phase $c_1(x)=1$ with a Hamiltonian in physical variables given by $\thplus\oplus \thminus$, where $\thminus$ is the Hamiltonian associated to the mass term $\mminus(x)$.

    The physical intuition for such a model is clear: we assume that the energy range where we operate the material is very small compared to the bulk gap generated by the mass terms  $m_{\tau,o}(\pm\infty)$ and so formally send the latter to infinity to avoid mathematical difficulties that are irrelevant to characterize the propagation of the interface modes.

More general materials (than the above Haldane model) may be represented as more general direct sums of block Hamiltonians of the form given above corresponding to the different species present in the system (sub-lattice as described above, valleys, spin, or any other internal degree of freedom; see \cite[III.C.1]{chiu2016classification}). Once we have obtained such a family of unperturbed Hamiltonians $\tilde H_0$, we can perturb them by a large class of random fluctuations $V$, with $V$ a Hermitian operator.

Let us reiterate that the models considered here are low energy approximations for wavenumbers close to Dirac points. Non-trivial bulk materials are characterized by non-trivial integers, the Chern numbers, which may be represented as integrals of curvature forms over the Brillouin zone. Such integers depend on the singular behavior of the curvature form in the vicinity of critical points, the Dirac points ($\pm K$ in the preceding model). Hamiltonians generically take the form of Dirac operators in the vicinity of the Dirac points. These Hamiltonians are our starting point. They encode the Chern numbers, and hence the topology of the materials typically considered in the physical literature \cite{Fruchart2013779}, by means of mass terms (of the form $m_{\tau,o}$ in the preceding model). We assume a continuous transition of the mass terms from one material to the other. We also assume that masses tend to infinity away from the interface in order to focus on edge properties, thereby eliminating the possibility of escape into the bulk and its analysis; see \cite{B-BulkInterface-2018}. This provides the starting point of our analysis, namely an operator of the form $\tilde H_0$ written as the direct sum of $2\times2$ Dirac operators of the form of $\thplus$ above.

\medskip

{\bf Topological classification.} Hamiltonians in the non-commutative setting are typically mapped to Fredholm operators by spectral calculus, whose index reflects the non-trivial topology of the problem; see \cite{avron1994,bourne2016chern,prodan2016bulk} and references there. The Dirac operators considered here can be mapped to similar Fredholm operators as shown in \cite{B-BulkInterface-2018}. We will instead leverage the asymptotic behavior of $m_{\tau,o}$ at $\pm\infty$ and propose a somewhat simpler classification.

To classify the above Hamiltonian $\thplus$ (or $\tilde H_0$), we introduce in section \ref{sec:TI} a family of Fredholm operators given by the regularization $\tilde D_v=\sigma_1\otimes\tilde H_0- vy\sigma_2\otimes I$ for some $v>0$ arbitrarily small, which should be thought of as $v=0^+$. This regularization is necessary to `compactify' the infinite domain $\Rm^2$ and bears some similarities with the topological method based on Green's functions presented in \cite{volovik2009universe}; see also \cite{fukui2012bulk}. The operator $\tilde D_v$ is a Dirac operator to which a Fredholm index can be assigned. 

The class of edge Hamiltonians considered in this paper is introduced in section \ref{sec:edgeH}. In section \ref{sec:TI}, we show that the index of these Hamiltonians is: (i) not modified in the presence of a large class of (random) perturbations; (ii) equal to $\Mtop-\Ntop$, with $\Mtop$ and $\Ntop$ the numbers of (protected) edge states propagating in the $\pm y$ direction in the vicinity of $x=0$, respectively; and (iii) two Hamiltonians with the same index are connected by a continuous path of appropriate Fredholm operators. 

Thus far, the Hamiltonians, such as $\thplus$ above, do not respect time reversal symmetry (TRS). In the presence of TRS of fermionic type introduced in section \ref{sec:mod2}, we verify that $\Mtop=\Ntop$ and so the above index vanishes. The remarkable result obtained in \cite{PhysRevB.76.045302,PhysRevLett.98.106803,RevModPhys.82.3045,PhysRevLett.95.146802} shows that because of the fermionic TRS, edge modes are still topologically (or algebraically) protected when they come in an odd number of pairs. Introducing the $\Zm_2$ index given by $\Mtop$ mod $2$, we obtain again that such an index is immune to a large class of random perturbations while two operators in the same class are connected by a path of Fredholm operators preserving the TRS.

\medskip

{\bf Scattering theory and diffusion approximation.} The edge Hamiltonians considered here involve minor modifications (primarily the behavior of the mass terms at infinity to concentrate the analysis to the vicinity of the interface and the resulting classification as Fredholm operators of regularized Dirac type) of low energy models readily available in the literature \cite{Fruchart2013779,prodan2016bulk,RevModPhys.83.1057}. The main motivation for their introduction is to {\em quantify} the interaction of random fluctuations and topology on the edge modes, and in particular obtain quantitative descriptions of the physically relevant notions of transmission and reflection (backscattering).

 Operators such as $\thplus$ are shown to be decomposed (for each wavenumber $\zeta$, the dual variable to $y$) into an infinite number of edge modes providing an appropriate basis of functions in $L^2(\Rm_x;\Cm^{\cal N})$, where ${\cal N}$ is the dimension of the spinors to which the Hamiltonian is applied. Some of these modes are the edge modes among the $\Mtop$ and $\Ntop$ that characterize the topology of $\tilde D_v$. Combining them with the other modes provides a basis to describe a scattering theory as follows.

For a given energy level $E$, a finite number of these edge modes are propagating while the rest are evanescent. In the presence of random fluctuations $\tilde V$ coupling the propagating modes (see Hypothesis \ref{hyp:prop} in section \ref{sec:scat}), the amplitudes of said modes satisfy a closed system of equations (in the $y$ variable). Such a scattering theory is introduced in section \ref{sec:scat}. Edge transport is then characterized by a scattering matrix composed of reflection and transmission coefficients. Conductance in such systems is then physically proportional to the trace of the transmission matrix. In the topologically trivial setting, Anderson localization shows that such a conductance decays exponentially as the thickness of the slab of random perturbations increases. We will show that the conductance is at least equal to the index of the Dirac operator $\tilde D_v$ in general and at least equal to the mod 2 index in the presence of TRS. This validates the intuition that non-trivial topology prevents (complete) Anderson localization of the edge states. 

Often associated to the absence of localization is the absence of back-scattering. The latter does not hold in general. For sufficiently low energy levels, the absence of back-scattering is certainly observed in some cases.  However, for sufficiently high energy levels, the number of edge modes is given by the protected modes plus a number of pairs of modes that is energy dependent. Scattering among these modes is triggered by the random fluctuations $\tilde V$.  In the diffusive regime, the scattering coefficients satisfy (in some cases) explicit quantitative diffusion (Brownian-type) motion as a function of thickness $L$ of the random medium. In this configuration, transmission is guaranteed by topology and all modes experience back-scattering except specific, medium-dependent, modes that are indeed reflection (back-scattering) free. In the presence of large randomness, we obtain the striking feature that $\Mtop-\Ntop$ ($\Mtop{\rm mod}2$ in the TRS setting) modes are allowed to transmit while every other mode (Anderson) localizes. The details of the derivation are presented in sections \ref{sec:scatTR} and  \ref{sec:diff}.



\medskip

The Hamiltonians considered here are low-frequency (energies close to the Fermi energy) approximations of tight-binding Hamiltonians derived for translation invariant materials. For recent derivations of edge states and their topological protection from explicit potentials in a Schr\"odinger equation with appropriate non-periodic perturbations, we refer the reader to \cite{Fefferman17062014,Fefferman2016}.  The stability properties of edge states we obtain in this paper in the TRS setting are consistent with those derived in \cite{Sadel2010} for randomly perturbed one dimensional Dirac (continuous) models. 

\section{Edge Hamiltonians}
\label{sec:edgeH}

The introduction section recalled the Dirac  equation \eqref{eq:example}  to model low energy edge modes. We now consider more general  models of  topologically trivial and non-trivial edge modes propagating in the $\pm y$ directions. We represent the Dirac operators in an equivalent, more convenient, basis. General edge Hamiltonians are then written as a direct sum of elementary blocks, which are not constrained to satisfy any translation invariance and can thus be perturbed by a large class of random fluctuations. 

\medskip

{\bf Topologically non-trivial block.} The main elementary ($2\times2$ block) operator in two space dimensions carrying an edge mode is given by:
\begin{equation}\label{eq:h+}
  \hplus = \left(\begin{matrix}
  \frac 1i \partial_y & \faplusstar \\ \faplus & -\frac 1i\partial_y
\end{matrix}\right) = \frac1i \partial_y \sigma_3+\frac1i \partial_x \sigma_2 + \mplus(x)\sigma_1.
\end{equation}
Here, $\faplus=\partial_x+\mplus(x)$ and its formal adjoint $\faplusstar=-\partial_x+\mplus(x)$. We assume that $C^\infty(\Rm)\ni \mplus(x)\to \pm\infty$ as $x\to\pm\infty$ and that $\frac{\mplus'(x)}{\mplus(x)}\to0$ as $|x|\to\infty$. To simplify the functional setting, we assume that $0<\lambda_0^{-1}<\frac{|\mplus(x)|}{|x|}<\lambda_0$ for $|x|>1$. A typical example is $\mplus(x)=\lambda x$ for $\lambda>0$ so that $\faplus$ is a rescaled version of the standard creation operator and $\faplus\faplusstar$ is related to the harmonic oscillator of quantum mechanics.

 We verify that $\faplus$ is bounded from $\fH_1(\Rm;\Cm^2)$ to $\fH_0(\Rm;\Cm^2)=L^2(\Rm;\Cm^2)$ with $\fH_1(\Rm;\Cm^n)$ the Hilbert space defined by the norm $(\|u\|^2_{L^2(\Rm)} + \|xu\|^2_{L^2(\Rm)}+\|\partial_x u\|^2_{L^2(\Rm)})^{\frac12} <\infty$ for each of the $n$ components.   Consequently, its adjoint operator $\faplusstar$ is bounded from $\fH_0$ to $\fH_1^*$, the dual space to $\fH_1$; here and below we use $\fH_1$ for $\fH_1(\Rm;\Cm^n)$ when $n$ is obvious from the context. The formal adjoint operator $\faplusstar$ is also bounded from $\fH_1$ to $\fH_0$ so that $\faplus$ is bounded from $\fH_0$ to $\fH_1^*$ as well. To simplify notation, we use the same expression $\faplus$ and $\faplusstar$ for these operators defined on different domains, so that both $\faplus\faplusstar$ and $\faplusstar\faplus$ are bounded from $\fH_1^*$ to $\fH_1$. 

The above Hamiltonian $\hplus$ is the same as $\thplus$ described  in the introduction but written in a different basis. Let $Q_2=\frac1{\sqrt2}\left({\scriptsize\begin{matrix}
 1&1\\i&-1
\end{matrix}}\right)$ be the unitary matrix whose columns are the eigenvectors of $\sigma_2$. We verify that $Q_2^*\thplus Q_2=\hplus$. This new basis is more convenient as $\partial_x$ and $m_{\tau,o}(x)$ appear in the same matrix entries; see also \cite{Fruchart2013779}. 

We shall see in section \ref{sec:scat} that the above operator carries one edge mode propagating in the positive $y$ direction without dispersion as well as an infinite number of pairs of dispersive modes.  Standard calculations \cite{Fruchart2013779,RevModPhys.83.1057} show that when $\mplus(x)=\lambda x$,  a solution to $\hplus\psi=E\psi$ for $E\in\Rm$ is given by 
\begin{equation}
\label{eq:edgemode}
  \psi(x,y) =  c e^{iEy} e^{-\frac\lambda2 x^2} \left(\begin{matrix}
  1\\0
\end{matrix}\right),
\end{equation}
with $c$ a normalizing constant. This is the typical example of an edge mode concentrated in the vicinity of $x=0$ with a linear dispersion relation $E(\zeta)=\zeta$ for $\zeta$ the dual (Fourier) variable to $y$ with a group velocity $\frac{\partial E}{\partial \zeta}=1$ (the Fermi velocity $v$ being normalized to $1$).

Here and below, we use the notation ${}^*$ to represent Hermitian conjugation and $\bar{\ }$ to represent complex conjugation (as opposed to the notation ${}^\dagger$ and ${}^*$, respectively, that is more standard in the physics literature).

We also consider edge modes propagating in the opposite direction (toward negative values of $y$). They are supported by the operator
\begin{displaymath}
  - \hplus = \left(\begin{matrix}
  -\frac 1i \partial_y & -\faplusstar \\ -\faplus & \frac 1i\partial_y
\end{matrix}\right) ,
\end{displaymath}
where the direction of time changed in a time-dependent Schr\"odinger equation.  Such modes are also modeled by the operator $\bar \hplus$, which may be seen as the time reversal conjugate to $\hplus$; see section \ref{sec:mod2}. These two operators, $\bar \hplus$ and $-\hplus$, are the same operator written in different bases since $\sigma_3\bar \hplus\sigma_3=-\hplus$ (and $\sigma_3^{-1}=\sigma_3$). Since some calculations to follow are simpler with $-\hplus$, we use this choice of a representation.


\medskip

{\bf Topologically trivial blocks.} 
We finally consider the case of non topologically protected edge modes. Such modes are modeled by a localizing mass term $\mminus(x)$ that does not change sign at infinity. We assume that $C^\infty(\Rm)\ni \mminus(x)\to +\infty$ as $x\to\pm\infty$. We also assume that $0<\lambda_0^{-1}<\frac{|\mminus(x)|}{|x|}<\lambda_0$ for $|x|>1$; typically $\mminus(x)$ is a smooth version of (or exactly that if we relax the regularity constraint) $\lambda |x|$ for $\lambda>0$.  We could similarly have a term $\mminus(x)$ converging to $-\infty$ at $\pm\infty$. What is topologically relevant is that $\mminus$ has the same sign at $\pm\infty$ while $\mplus$ changes sign from $-\infty$ to 
$+\infty$.

We then define the building block
\begin{equation}\label{eq:h-}
  \hminus = \left(\begin{matrix}
  \frac 1i \partial_y & \faminusstar \\ \faminus & -\frac 1i\partial_y
\end{matrix}\right), \qquad \faminus = \partial_x+\mminus(x).
\end{equation}
The time reversal conjugate of such a block is then $-\hminus=\sigma_3\bar \hminus\sigma_3$. The operators $\faminus$ and $\faminusstar$ share the same functional setting as $\faplus$ and $\faplusstar$. Note that no mode of the form \eqref{eq:edgemode} with linear dispersion relation exists for the above Hamiltonian.

\medskip

{\bf Unperturbed edge Hamiltonian.} The unperturbed Hamiltonian describing the propagation of the edge modes is then given more generally by the direct sum of copies of the preceding building blocks:
\begin{equation}\label{eq:H0}
   H_0 = \hplus^{\oplus \Mtop} \oplus (-\hplus)^{\oplus \Ntop} \oplus \hminus^{\oplus \Mtriv} \oplus (-\hminus)^{\oplus \Ntriv}.
\end{equation}
Such Hamiltonians now act on spinors of size $\mN=2(\Mtop+\Ntop+\Mtriv+\Ntriv)$. This generalizes the operator in \eqref{eq:example} (written in the Fourier domain for the variable $y$) where we had $\Mtop=\Mtriv=1$. Here, $\Mtop$ models the Chern number generalizing $c_1(+\infty)$ introduced in the introduction while $\Ntop$ models the the Chern number generalizing $c_1(-\infty)$. The bulk-edge correspondence then states that the topologically relevant number of protected edge modes should be $\Mtop-\Ntop$. This will be verified in the next sections.

There is no fundamental reason to assume that the mass terms $m_{\tau,o}(x)$ are the same for all the propagating modes. To (slightly) simplify notation, however, we assume that they are indeed the same and that $\mplus^2(x)=\mminus^2(x)$ for $|x|\geq1$, say. This is the basic setting for the modeling of edge modes, with a collection of $\Mtop$ upward propagating modes with (approximately) linear dispersion, $\Ntop$ downward propagating modes also with (approximately) linear dispersion, and as we shall see, a large class of other dispersive modes propagating upwards and downwards (i.e., with positive and negative currents) in equal numbers.

\medskip

{\bf General edge Hamiltonian.}
With this definition, the class of Hamiltonians we consider in this paper are of the form
\begin{equation}\label{eq:edgeH}
  H = H_0 + V,
\end{equation}
where $V$ is a perturbation that models a wide class of (random) fluctuations. We will be more precise on our assumptions on $V$ in the next sections.

\medskip

The objective of this paper is to analyze and quantify the robustness of the edge modes with respect to the perturbation $V$. We distinguish two types of systems. One is the general setting of time reversal breaking Hamiltonians as described above. We will show that an index given by $\Mtop-\Ntop$ characterizes the classes of homotopically equivalent Hamiltonians for a given dimension of spinors ${\cal N}= 2(\Mtop+\Ntop+\Mtriv+\Ntriv)$. 

The second regime is that of Hamiltonians satisfying a fermionic Time Reversal symmetry. In such a setting,  $\Mtop=\Ntop$ and the index vanishes. We will show that the Hamiltonians with a given dimension ${\cal N}$ are separated into two homotopy classes given by the $\Zm_2$ index  $\Mtop{\rm mod}\ 2$. 


%
\section{Topological invariants}
\label{sec:TI}

Hamiltonians in the non-commutative setting such as those presented in the preceding section (operators $H$ corresponding to different perturbations $V$ or different mass terms $m(x)$ no longer commute) need to be assigned topological invariants. A well-established procedure to do so is based on the notion of Fredholm modules and on Fredholm operators constructed from the Hamiltonian by spectral calculus; see \cite{bourne2016chern,prodan2016bulk} for definitions and details, as well as \cite{B-BulkInterface-2018} for continuous Dirac Hamiltonians. Here, we use the specific structure of the mass terms $m(x)$ to introduce a simpler topological classification. We first focus on  the setting in which the Hamiltonian is not necessarily time reversal symmetric and then the setting where the  fermionic TRS holds.

\subsection{Index theory for general edge Hamiltonians}

The edge Hamiltonians described in \eqref{eq:edgeH} are now mapped to Fredholm operators to which an index may be assigned. 
Let us first focus on the block $\hplus$. Such a Hamiltonian may be written in the Fourier domain $y\to\zeta$, where it has the following expression
\begin{displaymath}
  \hat \hplus(\zeta) = \left(\begin{matrix}
   \zeta & \faplusstar \\ \faplus & -\zeta
\end{matrix}\right).
\end{displaymath}
For each $\zeta$, this is a one-dimensional Hamiltonian with purely discrete spectrum  and a Fredholm operator from $\fH_1$ to $\fH_0$ thanks to the confinement provided by $\mplus(x)$. Since $\zeta$ is continuous in $\Rm$, this shows that $\hplus$ written in an appropriate functional setting on $\Rm^2$ will have essential (continuous) spectrum in the vicinity of $0$ (and hence cannot be Fredholm). In order to classify Hamiltonians, we use a classical regularization that renders all these Hamiltonians Fredholm operators by 'compactifying' the variable $y$. The simplest regularization consists of adding a term of the form $vy$ appropriately for $v>0$ (arbitrarily small).  

We should think of $v=0^+$ so that the physical description of the Hamiltonian is preserved in the limit $v\to 0^+$. Only the sign of $v$ matters to have a well defined index.  Specifically, we introduce the regularization
\begin{displaymath}
  D_v = \sigma_1\otimes H - \sigma_2\otimes vy = \left(\begin{matrix}
  0&H_v^*\\H_v&0
\end{matrix}\right),\quad H_v=H+\frac1i vy.
\end{displaymath}
Here, $H_v$ is a non-Hermitian operator because of the presence of the $\frac1ivy$ term, and $D_v$ is the associated (Hermitian) Dirac operator. Our edge Hamiltonians are then classified according to the index of the above Dirac operator defined as
\begin{equation}\label{eq:index}
  \ind \ D_v := \ind\ H_v = \Mtop-\Ntop.
\end{equation}
We now justify the use of such an index.

Let $\hplusv$ be the regularization of $\hplus$:
\begin{displaymath}
  \hplusv = \left(\begin{matrix}
   \frac1i \fa_v & \faplusstar \\ \faplus & \frac1i \fa_v^*
\end{matrix}\right),\qquad \fa_v = \partial_y+vy.
\end{displaymath}
We recognize in $\hplusv$ a similar operator to the one defined in \cite[Prop. 19.2.9]{H-III-SP-94} with $n=2$. 

It is not difficult to verify that it is a Fredholm operator with index equal to $1$ from  $\mH_1(\Rm^2;\Cm^2)$ to $\mH_0(\Rm^2;\Cm^2)$, where $\mH_0=L^2((\Rm^2;\Cm^2))$ while $\mH_1(\Rm^2;\Cm^n)$ is the Hilbert space with norm $(\|(1+|x|+|y|)u\|^2+\|\partial_x u\|^2+\|\partial_y u\|^2)^{\frac12}$ for each of the $n$ components. These are the two-dimensional counterparts of $\fH_0$ and $\fH_1$, respectively. Here, $\|\cdot\|$ is the usual $L^2(\Rm^2)$ norm. We use the notation $\mH_j$ for $\mH_j(\Rm^2;\Cm^n)$ when $n$ is clear from the context. As in the one dimensional case, both $\hplusv$ and $\hplusv^*$ are Fredholm operators (with indices $+1$ and $-1$, respectively) from $\mH_1$ to $\mH_0$ as well as from $\mH_0$ to $\mH_1^*$, the dual to $\mH_1$.

If $\hminusv$ is defined similarly with $\faplus$ replaced by $\faminus$, then we obtain a Fredholm operator (in the same topologies) with a vanishing index.  Similarly, $(-\hminus)_v=-\hminus+\frac1ivy$ is also a Fredholm operator with vanishing index.  Finally, we verify that 
\begin{displaymath}
  (-\hplus)_v= \left(\begin{matrix}
  \frac 1i (vy-\partial_y) & -\faplusstar \\ -\faplus & \frac 1i(vy+\partial_y)
\end{matrix}\right) = \left(\begin{matrix}
 \frac1i \fa_v^* & -\faplusstar \\ -\faplus & \frac1i \fa_v
\end{matrix}\right) = -\hplusv^*,
\end{displaymath}
is a Fredholm operator with an index equal to $-1$, as is $(\bar \hplus)_v=\sigma_3(-\hplus)_v\sigma_3$. Note that $(-\hplus)_v$ can be continuously deformed to $(\bar \hplus)_v$ along a path of (non Hermitian) Fredholm operators by defining $\sigma(t)={\rm Diag} (1,e^{i\pi t})$, which continuously deforms the identity matrix to $\sigma_3$ in the space of unitaries. The path is given by $\sigma^{-1}(t)(-\hplus)_v\sigma(t)$. As indicated above, we could therefore have used $\bar \hplus$ as the building block for negatively propagating edge modes in \eqref{eq:H0} instead of $(-\hplus)$.

This proves the validity of the index \eqref{eq:index} when $H$ is replaced by $H_0$, a Fredholm operator of index $\Mtop-\Ntop$ from $\mH_1(\Rm^2;\Cm^{{\cal N}})$ to $\mH_0(\Rm^2;\Cm^{{\cal N}})$. Now, for $V$ any relatively compact perturbation of $H_{0v}$, i.e., such that $(\lambda-H_{0v})^{-1}V$ is compact as an operator defined in ${\cal L}(\mH_0(\Rm^2;\Cm^{{\cal N}}))$ for one $\lambda\in\Cm$ (and hence all $\lambda$ in the resolvent set of $H_{0v}$), we obtain the standard result that the index is invariant  \cite[Chapter 19]{H-III-SP-94}: $\ind H_v=\ind H_{0v}$. This justifies the definition of the index \eqref{eq:index} for the operators $H_{v}$ considered in this paper. We observe that any local multiplication by $V(x,y)$ for $V$ bounded on $\Rm^2$ provides such a relatively compact perturbation. Note that the index is also independent of the choice of $v>0$.

It is a classical (deep) result that Fredholm operators on Hilbert spaces with the same index (possibly defined on different (separable) Hilbert spaces since all such spaces can be identified) are path-connected; see \cite[Theorem 3.40]{bleecker2013index} and the comments before that (Atiyah-J\"anich) theorem. Since we will need a similar result in the TRS setting where a general theory (which undoubtedly applies) may be harder to find, we propose below a simple standard construction of the path. The proof also displays explicitly why a material with $\Mtop=\Ntop=1$ that may look non-trivial topologically is in fact trivial (in that sense) and equivalent to a material with $\Mtriv=\Ntriv=1$.

\begin{theorem}
\label{thm:topo} Let $H_1$ and $H_2$ be two Hamiltonians of the form \eqref{eq:edgeH} with the same spinor dimension $\mN$.  Let $H_{1v}$ and $H_{2v}$ be their Fredholm regularizations and $V$ be a relatively compact perturbation as described above.  Assume that $\ind H_{1v}=\ind H_{2v}$. 

Then there exists a continuous family of Fredholm operator $F_t$ for $0\leq t\leq 1$ such that $F_0=H_{1v}$ and $F_1=H_{2v}$.
\end{theorem}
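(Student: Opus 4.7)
The plan has three phases. \textbf{Phase 1 (absorb the perturbations).} For $i=1,2$, consider the straight-line homotopy $t \mapsto H_{0iv} + (1-t)V_i$ for $t \in [0,1]$. Since $V_i$ is relatively compact with respect to $H_{0iv}$, each operator along this path is itself a relatively compact perturbation of the Fredholm operator $H_{0iv}$, hence Fredholm with the same index by the stability result cited immediately before the theorem statement. This connects each $H_{iv}$ to its unperturbed core $H_{0iv}$ and reduces the task to matching unperturbed regularized Hamiltonians.

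\textbf{Phase 2 (normalize the mass profiles).} Any two admissible mass functions for $\mplus$ (respectively $\mminus$) are connected by a convex combination that remains in the same class throughout: preserved sign pattern at $\pm\infty$, linear growth, and the ratio bound. Along such a path, the fibered one-dimensional operator $\hat{\hplus}(\zeta)$ retains its discrete spectrum and the dimension of its kernel at $\zeta=0$ (namely $1$ in the nontrivial case and $0$ in the trivial case), so after $vy$-regularization the two-dimensional Dirac operator remains Fredholm with unchanged index. This brings each unperturbed regularized operator to a canonical form determined only by the multiplicities $(\Mtop,\Ntop,\Mtriv,\Ntriv)$ of its four building blocks.

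\textbf{Phase 3 (pair cancellation).} The core step is a homotopy lemma: $\hplus \oplus (-\hplus)$ is connected through Fredholm (regularized) operators to $\hminus \oplus (-\hminus)$, and a single $\hminus_v$ is connected to $-\hminus_v$ via the phase rotation $e^{i\pi t}\hminus_v$, which is Fredholm of index $0$ throughout because scalar multiplication preserves kernels and cokernels. For the pair lemma I would interpolate $m_s(x) = (1-s)\mplus(x) + s\,\mminus(x)$ simultaneously in both $2\times 2$ sub-blocks and add a continuous off-diagonal coupling $\kappa(s) I_2$ with $\kappa(0)=\kappa(1)=0$ and $\kappa(s)>0$ otherwise, producing a $4\times 4$ operator $T_s$ whose regularized diagonal entries are $A_s=h_{m_s}+\tfrac{1}{i} vy$ and $B_s=-h_{m_s}+\tfrac{1}{i} vy$. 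The key algebraic identity $A_s^*+B_s=0=A_s+B_s^*$ makes the cross terms in $T_s^* T_s$ vanish, yielding the lower bound $T_s^* T_s \geq \kappa(s)^2 I$, which gives invertibility and hence the Fredholm property for $s\in(0,1)$. Applying this pair lemma $\min(\Mtop,\Ntop)$ times, together with the phase rotations above and unitary block-swaps to homogenize the trivial blocks, reduces every unperturbed regularized operator to a canonical form depending only on the index $\Mtop-\Ntop$ and the spinor dimension $\mN$.

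Since both invariants coincide for $H_{1v}$ and $H_{2v}$ by hypothesis, the two canonical forms agree, and concatenating the homotopies of Phases 1--3 yields the required continuous path $F_t$ from $F_0=H_{1v}$ to $F_1=H_{2v}$. I expect the main obstacle to be the verification in Phase 2 that the convex combination of admissible mass profiles preserves the uniform bounds needed for each fiber $\hat{\hplus}(\zeta)$ to remain Fredholm after $vy$-regularization without spawning spurious low-lying spectrum; Phase 3 is cleanly handled by the explicit $\kappa$-coupling above.
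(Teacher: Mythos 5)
Your Phases 1 and 2 are sound and essentially coincide with the paper's first step (the straight-line homotopy $H_{0v}+tV$) together with a harmless normalization; note that convex combinations of masses \emph{of the same sign class} are indeed admissible. The genuine gap is in Phase 3. The interpolation $m_s=(1-s)\mplus+s\,\mminus$ joins a mass with $\mplus(-\infty)=-\infty$ to one with $\mminus(-\infty)=+\infty$, so for intermediate $s$ the profile need not be confining at $x=-\infty$; for the model profiles $\mplus=\lambda x$, $\mminus=\lambda|x|$ one gets $m_{1/2}\equiv0$ on $x\le0$. Your identity $T_s^*T_s\ge\kappa(s)^2I$ is correct, but it is an inequality of quadratic forms on $\mH_0$ and only yields $\|T_su\|_{\mH_0}\ge\kappa\|u\|_{\mH_0}$. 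That does \emph{not} give closed range for $T_s$ as a bounded operator from $\mH_1$ to $\mH_0$, which is the fixed functional setting in which the continuous family $F_t$ must live. Concretely, when $m_s$ vanishes on a left half-line the maximal $L^2$-domain of $T_s$ strictly contains $\mH_1$: take $u=g(y)w(x)e_1$ with $g$ Gaussian, $w$ supported in $x<-2$, $w,w'\in L^2$ but $xw\notin L^2$; then $T_su\in\mH_0$ while $u\notin\mH_1$. A cutoff argument then shows $T_s(\mH_1)$ is a dense \emph{proper} subspace of $\mH_0$, hence not closed, hence $T_s$ is not Fredholm from $\mH_1$ to $\mH_0$ at such $s$. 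The constant $\kappa^2$ sits at the bottom of the essential spectrum of $T_s^*T_s$ there; it does not restore the lost $x$-confinement.

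This is exactly the obstruction the paper's construction is designed to dodge: since the sign of $m(-\infty)$ is locally constant along any path of confining masses, no admissible interpolation of $m$ can connect $\mplus$ to $\mminus$, and the pair cancellation must be performed without ever deforming the mass through a non-confining profile. The paper instead rotates, $F_t={\rm Diag}(h_v,1)\,R_t\,{\rm Diag}(1,-h_v^*)$, so that every intermediate operator is a composition of Fredholm maps (between the spaces $\mH_0\oplus\mH_1$ and $\mH_1^*\oplus\mH_0$), and at $t=\pi/2$ only $h_vh_v^*$ survives; because $\faplus\faplusstar=-\partial_x^2+\mplus^2+\mplus'$ and $\mplus^2=\mminus^2$ for $|x|\ge1$, the $\tau$ and $o$ versions then differ by a relatively compact perturbation and can be joined by a straight line. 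In other words, the paper makes the homotopy sign-blind by passing to the squared operator, which is the ingredient your coupling trick is missing. A secondary, more minor point: your phase rotation $e^{i\pi t}\hminusv$ terminates at $-\hminusv$, whereas the block appearing in \eqref{eq:H0} regularizes to $(-\hminus)_v=-\hminusv^*$ (conjugation flips the sign of $v$); these two index-zero operators still have to be connected by a further argument rather than by the rotation alone.
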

\begin{proof}
We wish to show that $\Mtop-\Ntop$ indeed characterizes the above class of edge Hamiltonians. 
Since $V$ is not affecting the index, $H_{0v}+tV$ for $t\in[0,1]$ provides a continuous family of Fredholm operators linking $H_{0v}$ to $H_v$.
We consider  $D_{0v}$ the regularization of the unperturbed operator $H_0$ defined in \eqref{eq:H0}. 
Let $\Mtop>0$ and $\Ntop>0$ and consider a sub-block of $H_0$ defined by the pair $f=\hplus\oplus (- \hplus)$ with $f_{v}$ its Fredholm regularization. Let now $g=\hminus\oplus (- \hminus)$ be a similarly defined block of trivial operators and $g_v$ its Fredholm regularization. More precisely, let us define
\begin{displaymath}
  f_v = \left(\begin{matrix}
 \hplusv & 0 \\ 0 & -\hplusv^*
\end{matrix}\right),\quad D_{f} = \left(\begin{matrix}
  0&f_v^* \\ f_v & 0
\end{matrix}\right),
 \qquad g_v = \left(\begin{matrix}
 \hminusv & 0 \\ 0 & -\hminusv^*
\end{matrix}\right),\quad D_{g} = \left(\begin{matrix}
  0&g_v^* \\ g_v & 0
\end{matrix}\right).
\end{displaymath}
Here, $f$ corresponds to $\Mtop=1$ and $\Ntop=1$ while $g$ corresponds to $\Mtriv=1$ and $\Ntriv=1$. We want to show that the two Fredholm operators $f_v$ and $g_v$ are linked by a continuous path of Fredholm operators and equivalently (if one insists on working with Hermitian operators) that the Hermitian Dirac operators $D_f$ and $D_g$ are linked by a continuous path of Hermitian Fredholm operators. This is done as follows. Let us define the rotation (by $-t$)
\begin{displaymath}
  R_t = \left(\begin{matrix}
  c_t & s_t \\ -s_t & c_t 
\end{matrix}\right), \qquad c_t=\cos\ t,\quad s_t=\sin\ t.
\end{displaymath}
Consider the family of operators
\begin{displaymath}
  F_t =\left(\begin{matrix}
  h_v & 0 \\ 0 & 1 
\end{matrix}\right)  \left(\begin{matrix}
  c_t & s_t \\ -s_t & c_t
\end{matrix}\right) \left(\begin{matrix}
  1 & 0 \\ 0 & -h_v^*
\end{matrix}\right) = \left(\begin{matrix}
  c_t h_v & -s_t h_v h_v^* \\ -s_t & -c_t h_v^*
\end{matrix}\right),
\end{displaymath}
for $h_v$ being  $\hplusv$ for $F_0=f_v$ or $\hminusv$ for $F_0=g_v$. We observe that this is a family of Fredholm operators (as a composition of Fredholm operators) from $\mH_0\oplus\mH_1$ to $\mH_1^*\oplus\mH_0$; see, e.g., \cite[Corollary 19.1.7]{H-III-SP-94}.
The family continuously links $F_0$ to
\begin{displaymath}
  F_{\frac\pi2} = -\left(\begin{matrix}
  0&h_v h_v^* \\ 1&0
\end{matrix}\right), \qquad h_{\alpha v} h_{\alpha v}^* =\left(\begin{matrix}
    \fa_v\fa_v^* + \fa_\alpha^* \fa_\alpha & 0 \\ 0 & \fa_v\fa_v^*+\fa_\alpha\fa_\alpha^*
\end{matrix}\right),\quad \alpha=\tau,o.
\end{displaymath}

Now, we observe that $\hplusv\hplusv^*$ and $\hminusv\hminusv^*$ are homotopic Fredholm operators. Indeed
\begin{displaymath}
  \faplus\faplusstar = -\partial_x^2 + \mplus^2(x) +\mplus'(x),\qquad \faminus\faminusstar = -\partial_x^2 + \mminus^2(x) +\mminus'(x).
\end{displaymath}
Both operators are equal to $-\partial_x^2+\mplus^2(x)$ from $\fH_1$ to $\fH_1^*$ (in one space dimension) up to the relatively compact perturbations $m'_\tau(x)$ and $m'_0(x)+\mplus^2(x)-\mminus^2(x)$, respectively. They are therefore homotopic as Fredholm operators (with vanishing index). They are also homotopic to $\fa_\alpha^*\fa_\alpha$ for the same reason. With the same reasoning with the operators $\fa_v$, we obtain that $\hplusv\hplusv^*$ and $\hminusv\hminusv^*$ are homotopic Fredholm operators from $\mH_1^*$ to $\mH_1$ (in two space dimensions).

This proves that $f_v$ and $g_v$ are homotopic Fredholm operators and that $D_f$ and $D_g$ are homotopic Hermitian operators. 
This shows that the operators with $\Mtop=\Ntop=1$ and $\Mtop=\Ntop=0$ are homotopic. Repeating the argument a finite number of times justifies using the index $\Mtop-\Ntop$ to characterize a large class of edge models that are homotopic to each other in the sense described above.
\end{proof}

\subsection{Mod 2 Index theory for TR edge Hamiltonians}
\label{sec:mod2}

Let us now assume that the Hamiltonians satisfy a fermionic time reversal symmetry (TRS) whose definition is given below. Consider first a Hamiltonian with $\Mtop$ propagating modes $\mhplus=\hplus^{\oplus \Mtop}$. This operator does not satisfy TRS. A time reversal symmetric operator is obtained by direct sum $\mhplus\oplus \bar \mhplus$. We may also consider the presence of $\Mtriv$ trivial edge modes $\mhminus=\hminus^{\oplus \Mtriv}$, which becomes a time reversal operator after direct summation with $\bar \mhminus$. Let us define the unperturbed operator as 
\begin{equation}\label{eq:H0TRS}
  H_0 = \left(\begin{matrix}
  \mhplus\oplus\mhminus &0\\0& \bar \mhplus\oplus\bar\mhminus
\end{matrix}\right).
\end{equation}
A general Hamiltonian $H$ satisfying the fermionic TRS is one such that 
\begin{equation}
\label{eq:TRS}
 \theta H \theta^{-1}=H,
\end{equation}
where $\theta=\mT\mK$, with $\mK$ complex conjugation and $\mT$ given for the above representation (as $2(\Mtop+\Mtriv)$ blocks) by $\mT=i\sigma_2\otimes I$. We verify that $\mT^{-1}=-\mT$ so that $\mT^2=-1$ and hence $\theta^2=-1$ as well. The $-1$ above is a characteristic of fermionic TRS and is crucial for the topological protection. Note that $\theta$ is an anti-linear transformation, such that $\theta(\alpha\psi)=\bar\alpha\theta(\psi)$ for $\alpha\in\Cm$. 

We verify that $\theta H_0\theta^{-1}=H_0$ for the above operator.  We also verify that the most general Hermitian TR preserving perturbation is of the form
\begin{equation}\label{eq:HTRS}
  H = H_0 + \left(\begin{matrix}
   V_1 & -\bar V_2 \\ V_2 & \bar V_1
\end{matrix}\right),
\end{equation}
with $V_1=V_1^*$ Hermitian and $V_2^T=-V_2$ an anti-symmetric operator; here ${}^T$ denotes symmetric transposition. In other words, $V$ is the complex representation of a matrix of quaternions. We will assume that $V_1$ and $V_2$ are appropriate perturbations of the leading term $H_0$ that satisfy the above constraints.

We define the mod 2 index \cite{PhysRevB.76.045302,1751-8121-42-36-362003,PhysRevLett.95.146802,schulz2013z_2} of $H$ as
\begin{equation}\label{eq:Z2}
  \ind_2 H = \Mtop\ {\rm mod} \ 2.
\end{equation}
We show that the above operators $H$ are indeed classified as two classes of homotopic families of Fredholm operators satisfying the TRS.

Let us first assume that $\Mtop=2$ and show that the Hamiltonian with two pairs of edge modes is homotopic to a trivial case. As in the preceding section, this requires regularizing the Hamiltonians so we can define Fredholm operators. It serves our purpose to choose a different sign of the regularization for each element in the pair. 
Let $H_0=\hplus\oplus \hplus\oplus \bar \hplus\oplus \bar \hplus$. We define the regularization
\begin{displaymath}
  H_{0v} = \hplusv \oplus \hplusv^* \oplus \bar \hplusv \oplus \bar \hplusv^* = \left(\begin{matrix}
  \hplusv \oplus \hplusv^* & 0 \\ 0 & \bar \hplusv \oplus \bar \hplusv^*
\end{matrix}\right).
\end{displaymath}
Note that $\hplusv^*$ is the same as a regularization with $v$ replaced by $-v$ (since the Hermitian conjugation changes $i$ into $-i$). 

In the general setting, we define  the regularization $(\mhplus\oplus\mhminus)_v$ as prescribed in the case $\Mtop=2$ by alternating the sign of $v$ so that $(\mhplus\oplus\mhminus)_v=\hplusv\oplus \hplusv^*\oplus \hplusv\ldots$ and then ensuring that the regularization of $\bar \mhplus\oplus\bar\mhminus$ is the complex conjugation of that of $\mhplus\oplus\mhminus$. This uniquely defines $H_{0v}$ and hence $H_v=H_{0v}+V$.

In the matrix representation \eqref{eq:HTRS}, any operator (not necessarily Hermitian as $H_v$ is no longer Hermitian) $H$ such that $\theta^{-1}H\theta=H$ is of the form
\begin{equation}\label{eq:TRS2}
  H = \left(\begin{matrix}
  H_1 & H_2 \\ H_3 & H_4
\end{matrix}\right) \quad \mbox{ for } \quad H_4 = \bar H_1,\quad H_3 = -\bar H_2.
\end{equation}
We verify that the regularization $H_v$ satisfies $\theta^{-1}H_v\theta=H_v$ and thus satisfies the TRS.

Then we have the result:

\begin{theorem}
\label{thm:topoTRS} Let $H_1$ and $H_2$ be two edge Hamiltonians satisfying the TRS constraint \eqref{eq:HTRS}, or equivalently $\theta^{-1}H_j\theta=H_j$ for $j=1,2$.  Let us assume that $\ind_2H_1=\ind_2H_2$ and let $H_{1v}$ and $H_{2v}$ be the regularizations as described above. Then there is a continuous family of Fredholm operators $F_t$ for $0\leq t\leq 1$ respecting the TRS \eqref{eq:TRS} with $F_0=H_{1v}$ and $F_1=H_{2v}$.
\end{theorem}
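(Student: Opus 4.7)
I would mirror the proof of Theorem \ref{thm:topo}, but build TRS-equivariance into every step. The strategy is: first deform away the perturbation $V$ along a linear path in the TRS class, then reduce any unperturbed regularized TRS Hamiltonian to a normal form by turning pairs of non-trivial blocks into pairs of trivial ones; equality of the $\Zm_2$ index then identifies the two normal forms.

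The first step is straightforward. The constraints $V_1=V_1^*$ and $V_2^T=-V_2$ in \eqref{eq:HTRS} are stable under scalar multiplication, so the linear path $H_{0v}^{(j)}+tV^{(j)}$ for $t\in[0,1]$ is a continuous family of Fredholm operators (relative compactness of $V^{(j)}$ preserves Fredholmness, exactly as in the proof of Theorem \ref{thm:topo}) that satisfies \eqref{eq:TRS} at every $t$. This reduces matters to connecting two unperturbed regularized operators $H_{0v}^{(1)}$ and $H_{0v}^{(2)}$ of the same spinor dimension with $\Mtop^{(1)}\equiv\Mtop^{(2)}\pmod 2$.

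The core step is a TRS lift of the rotation argument of Theorem \ref{thm:topo}. Because of the alternating-sign regularization, a pair of non-trivial blocks inside $\mhplus$ is regularized precisely as $\hplusv\oplus\hplusv^*$, which is the operator $f_v$ used in Theorem \ref{thm:topo}; its TRS partner contributes $\bar\hplusv\oplus\bar\hplusv^*$ in the lower block of \eqref{eq:TRS2} with zero off-diagonals. I would apply the rotation path $F_t$ of Theorem \ref{thm:topo} (together with the relatively compact homotopy between $\hplusv\hplusv^*$ and $\hminusv\hminusv^*$, and the inverse rotation) to the upper pair, and its complex conjugate $\bar F_t$ to the lower pair, leaving the off-diagonal blocks zero and every other block untouched. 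At each $t$ the resulting operator is block-diagonal of the form $\mathrm{diag}(A_t,\bar A_t)$, hence automatically of the TRS form \eqref{eq:TRS2}; Fredholmness at every $t$ is inherited from Theorem \ref{thm:topo}. At the terminal time the pair has been replaced by $\hminusv\oplus\hminusv^*$ and its conjugate, meaning one pair has been transferred from $\mhplus$ to $\mhminus$ while TRS is preserved. Iterating this, together with its reverse (which moves a pair from $\mhminus$ back into $\mhplus$), adjusts $(\Mtop,\Mtriv)$ by $(\mp 2,\pm 2)$ at will; concatenating enough such moves matches $(\Mtop^{(1)},\Mtriv^{(1)})$ to $(\Mtop^{(2)},\Mtriv^{(2)})$, which is exactly possible when $\ind_2 H_1=\ind_2 H_2$.

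The main obstacle is that this lift only works in pairs. An attempt to deform a single unpaired $\hplusv$, which is what would be required to change the parity of $\Mtop$, would alter the upper block of \eqref{eq:TRS2} without a consistent conjugate change in the lower block, so the relation $H_4=\bar H_1$ would break at intermediate times. This $\theta$-equivariance obstruction is precisely what forces the residue $\Mtop\bmod 2$ to be the unique invariant, and is why the rotation trick is applied only to pairs. All remaining technical bookkeeping (Fredholmness of each $F_t$ between the appropriate $\mH_0$, $\mH_1$, $\mH_1^*$ spaces, relative compactness of $V$, and continuity of the rotation family) carries over verbatim from the proof of Theorem \ref{thm:topo}.
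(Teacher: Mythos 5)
Your proposal is correct and follows essentially the same route as the paper: remove $V$ by a linear TRS-preserving homotopy, then apply the rotation trick of Theorem \ref{thm:topo} to the alternately regularized pair $\hplusv\oplus\hplusv^*$ in the upper block and its complex conjugate in the lower block (the paper's path ${\rm Diag}(H_t,\bar H_t)$), with the parity obstruction for a single unpaired block identified exactly as the paper does. The only cosmetic discrepancy is that the paper's $f_v$ in Theorem \ref{thm:topo} is $\hplusv\oplus(-\hplusv^*)$ rather than $\hplusv\oplus\hplusv^*$, which changes nothing in the argument.
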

\begin{proof}
The proof is similar to that of Theorem \ref{thm:topo}. We highlight the differences.
For any relatively compact perturbation $V$ such that $\theta^{-1}V\theta=V$, we obtain that $H_v+V$ is homotopically equivalent to $H_v$. We also define the family of Fredholm operators
\begin{displaymath}
  \left(\begin{matrix}
  h_v & 0 \\ 0 & 1
\end{matrix}\right) R_t \left(\begin{matrix}
  0&1\\0&h_v^*
\end{matrix}\right) = \left(\begin{matrix}
 c_t h_v & s_t h_vh_v^* \\ -s_t & c_t h_v^*
\end{matrix}\right) =:H_t.
\end{displaymath}
Here $h_v$ is $\hplusv$ or $\hminusv$. This shows that $H_0$ is homotopically equivalent along a TR symmetric path ${\rm Diag}\ (H_t,\bar H_t)$ to
\begin{displaymath}
\left(\begin{matrix}
0&\hplusv \hplusv^* &0&0\\-1&0&0&0\\0&0&0&\bar \hplusv\bar \hplusv^* \\ 0&0&-1&0
\end{matrix}\right).
\end{displaymath}
As in the proof of Theorem \ref{thm:topo}, the latter is then homotopic to the case with $\hplusv$ replaced by $\hminusv$. This shows that the pair of non-trivial edge modes is continuously deformed to a pair of trivial edge modes. The generalization to arbitrary $\Mtop$ is then obvious. One has to make sure that the numbers of regularization by $v>0$ is the same as the number of regularizations by $-v<0$ except possibly for one mode pair that cannot be paired.
%
\end{proof}

%
\section{Scattering theory}
\label{sec:scat}

We now consider the scattering theory for the model edge Hamiltonians introduced in the preceding sections.  We first focus on the scattering theory for the  block $\hplus$. The general scattering theory is then obtained by direct summation as in the definition of $H_0$ in \eqref{eq:H0} or in \eqref{eq:H0TRS}. We show that the scattering theory is directly affected by the index $\ind H$. We consider the scattering theory in the presence of TRS in the next section, where we obtain that scattering also depends on the index $\ind_2 H$. 

\subsection{Spectral decomposition}

Consider the operator $\hplus$ and its partial Fourier transform (from $y$ to the dual variable $\zeta$):
\begin{displaymath}
  \hat \hplus(\zeta) = \left(\begin{matrix}
   \zeta & \faplusstar \\ \faplus & -\zeta
\end{matrix}\right).
\end{displaymath}
For each $\zeta\in\Rm$, we diagonalize the above Fredholm operator.
We denote $\fa:=\faplus$ to simplify notation and look for solutions of $\hat \hplus\psi=E\psi$ for $\psi=(\psi_1,\psi_2)^t$. We find:
\begin{displaymath}
  \zeta\psi_1+\fa^*\psi_2=E\psi_1,\quad \fa\psi_1-\zeta\psi_2=E\psi_2,
\end{displaymath}
so that $\fa^*\fa\psi_1=(E^2-\zeta^2)\psi_1$ and $\fa\fa^*\psi_2=(E^2-\zeta^2)\psi_2$. The operators $\fa^*\fa$ and $\fa\fa^*$ are Fredholm self-adjoint operators that admit the spectral decomposition
\begin{displaymath}
  \fa^*\fa \nu_k = \eps_k \nu_k,\quad \Nm \ni k\geq0,\quad \mbox{ and } \quad \fa\fa^*\mu_k=\eps_k\mu_k,\quad \Nm\ni k\geq1.
\end{displaymath}
The positive eigenvalues of $\fa^*\fa$ and $\fa\fa^*$ are necessarily the same. With our assumption on $m(x)$, we find as in, e.g., \cite{Fruchart2013779,RevModPhys.83.1057} that $\eps_0=0$ is a simple eigenvalue of $\fa$ and $\fa^*\fa$ and not of $\fa\fa^*$. We denote by $\nu_0(x)$ the normalized solution of $\fa\nu_0=0$, which is of the form $\nu_0(x)=\alpha e^{-\int_0^x \mplus(t)dt}$. We define $\eps_k=\eta_k^2$ for $k\geq0$.
We normalize $\|\nu_k\|_{L^2(\Rm)}=1$ for $k\geq0$ and then $\mu_k=\eta_k^{-1}\fa\nu_k$ for $k\geq1$ so that $\eta_k\nu_k=\fa^*\mu_k$ as well. Note that both families $(\nu_k)_{k\geq0}$ and $(\mu_k)_{k\geq1}$ are then complete families in $L^2(\Rm_x)$. This gives the eigenvalues
\begin{displaymath}
  E_{k\pm}(\zeta) = \pm \sqrt{\eps_k+\zeta^2},
\end{displaymath}
for $k\geq1$ while $E_0(\zeta)=\zeta$.

For $k=0$, we find the mode:
\begin{displaymath}
 \Big[ E_0(\zeta)=\zeta,\quad \phi_0(x,\zeta) = \left(\begin{matrix}
  \nu_0(x) \\ 0
\end{matrix}\right)\Big].
\end{displaymath}
For $k\geq1$ and the notation $l=(k,\pm)$, we find the modes:
\begin{displaymath}
  \Big[ E_l(\zeta) = \pm \sqrt{\eps_k+\zeta^2},\quad \phi_l(x,\zeta) = \dfrac1{\sqrt{2E_l(E_l+\zeta)}}\left(\begin{matrix}
  (E_l+\zeta)\nu_k(x) \\ \eta_k \mu_k(x) 
\end{matrix}\right) \Big].
\end{displaymath}
Note that $E_l(E_l+\zeta)>0$ for $k\geq1$. For concreteness, the dispersion relation of the first few modes when $m(x)=5x$ is sketched in Fig. \ref{fig:dr}.
\begin{figure}
\center
\includegraphics[width=6cm]{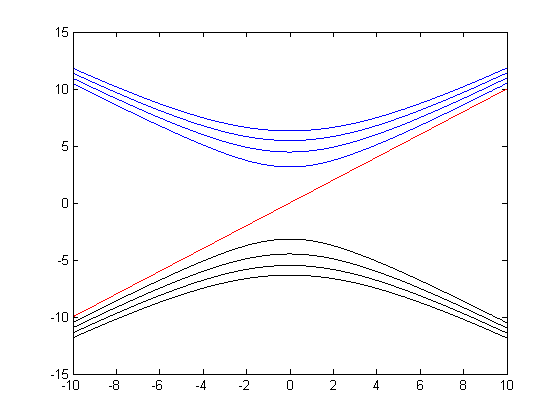}
\caption{Dispersion relation $E_l=E_l(\zeta)$ for $-3\leq l\leq 3$ and $m(x)=5x$.}
\label{fig:dr}
\end{figure}

The general solution of the equation $i\partial_t u = \hplus u$ is therefore
\begin{displaymath}
  u(t,x,y) = \dint_{\Rm}\dsum_l e^{-iE_l(\zeta)t} e^{i\zeta y} \phi_l(x,\zeta) \hat u_l(\zeta)d\zeta,
\end{displaymath}
for $\hat u_l(\zeta)$ uniquely characterized by the initial conditions $u(0,x,y)$  for instance.

\subsection{Waveguide decomposition}

The preceding spectral decomposition was obtained for each fixed $\zeta$. Scattering theory aims at fixing the energy $E$ and considering the compatible modes $\zeta=\zeta(E)$.  Let therefore $E$ be fixed and such that $E^2\not=\eta_k^2=\eps_k$ for $k\geq1$. The dispersion relation is $E^2=\eta_k^2+\zeta^2$ for $k\geq1$, which implies
\begin{displaymath}
  \zeta_l(E) = \pm \sqrt{E^2-\eta_k^2},\quad E^2>\eta_k^2,\qquad \zeta_l(E) = \pm i \sqrt{\eta_k^2-E^2},\quad E^2<\eta_k^2,
\end{displaymath}
again using the notation $l=(k,\pm)$ as well as the solution $\zeta_0(E)=E$.

This gives the presence of a protected edge mode $\zeta_0(E)=E$ without any time reversal (TR) mode at $-E$, a finite number of propagating modes for $\eta_k^2< E^2$ coming in pairs of TR symmetric modes, and an infinite number of evanescent modes for $\eta_k^2> E^2$ also coming in TR pairs. 

We then find the following set of modes. For $k=0$, we have
\begin{displaymath}
  \varphi_0(x,y) = \phi_0(x) e^{iEy},\quad \phi_0(x) = \left(\begin{matrix}
  \nu_0(x)\\0
\end{matrix}\right).
\end{displaymath}
For the $k\geq1$ propagating modes, we have the two linearly independent (but not orthogonal for the usual inner product) solutions
\begin{displaymath}
  \varphi_{k+}(x,y) = \phi_{k+}(x) e^{i\zeta_k y},\quad \phi_{k+}(x) = \frac1{\sqrt{j_k}} \left(\begin{matrix}
  c_k \nu_k(x) \\ s_k\mu_k(x)
\end{matrix}\right),
\end{displaymath}
and
\begin{displaymath}
  \varphi_{k-}(x,y) = \phi_{k-}(x) e^{-i\zeta_k y},\quad \phi_{k-}(x) = \frac1{\sqrt{j_k}} \left(\begin{matrix}
  s_k \nu_k(x) \\ c_k\mu_k(x)
\end{matrix}\right),
\end{displaymath}
where we have defined
\begin{displaymath}
  c_k = \frac{E+\zeta_k}{\sqrt{2E(E+\zeta_k)}},\quad s_k=\frac{\eta_k}{\sqrt{2E(E+\zeta_k)}},\quad j_k=c_k^2-s_k^2=\frac{\zeta_k}E\  (>0).
\end{displaymath}
We verify that $c_k^2+s_k^2=1$ and $c_k^2-s_k^2=j_k$. The above normalization of the vectors is such that the current\footnote{In the Heisenberg picture, the position operator $y$ evolves according to $j_y:=\dot y=i[\hplus,y]=\sigma_3$, which is the current operator along the edge for the Hamiltonian $\hplus$ in \eqref{eq:h+}. It is convenient to normalize propagating modes so that they have unit current.} in the $y$ direction is normalized: $(\phi_l,\sigma_3 \phi_l)=\epsilon_j$, (with the notation $l=(k,\epsilon_l=\pm1)$) where the inner product is that of $L^2(\Rm_x;\Cm^2)$. We also find that $(\phi_l,\phi_l)=j_k^{-1}$.

Finally, the evanescent modes are given by 
\begin{displaymath}
  \varphi_{k\pm}(x,y) = \phi_{k\pm}(x) e^{\pm|\zeta_k|y},\quad \phi_{k\pm}(x) = \frac{1}{\sqrt2} \left(\begin{matrix}
  \theta_k^{\pm1} \nu_x(x) \\ \mu_k(x)
\end{matrix}\right)  ,\quad \theta_k^{\pm1} = \frac{E\pm i|\zeta_k|}{\eta_k} \in \Sm^1.
\end{displaymath}
For the evanescent modes, we verify that the current in the $y$ direction vanishes with $(\phi_l,\sigma_3 \phi_l)=0$ while $(\phi_l,\phi_l)=1$.

\subsection{Mode decomposition and current conservation}

The above propagating and evanescent modes are the solutions to the fixed energy problem $H_0 u(x,y) = Eu(x,y)$, where $H_0=\hplus$ here. The solutions $\phi_l(x;E)$ form a basis of $L^2(\Rm_x; \Cm^2)$ so that for $H=H_0+V$ with $V$ a perturbation, we may write the solutions of $Hu=Eu$ as 
\begin{displaymath}
  u(x,y) = \dsum_l \alpha_l(y) \varphi_l(x,y).
\end{displaymath}
The objective of the scattering theory is to find equations for and analyze the behavior of the amplitudes $\alpha_l(y)$. Plugging the above ansatz into $Hu=Eu$ recalling that $H_0\psi_l=E\psi_l$, we find after some algebra that
\begin{displaymath}
  \dsum_l -i\alpha_l'(y) \sigma_3 \varphi_l  + V \alpha_l \varphi_l =0.
\end{displaymath}
By taking appropriate inner products, we find that 
\begin{displaymath}
  \epsilon_l \alpha'_l + i \dsum_{l'} \alpha_{l'} (\varphi_l,V\varphi_{l'}) =0,\qquad \epsilon_l = \pm 1 \mbox{ for } l=(k,\pm),
\end{displaymath}
for propagating modes while for evanescent modes, we find
\begin{displaymath}
   \alpha_{k\mp} ' \frac{\theta^{\pm 2}_k-1}2 + i \dsum_{l'} \alpha_{l'} (\varphi_l,V\varphi_{l'}) =0.
\end{displaymath}

The evanescent modes play a quantitative role in the propagation of all modes but qualitatively do not modify the regime of propagation  \cite{FGPS-07}. Their inclusion involves significant technical complications that we do not consider here. We will therefore neglect them by making the following:
\begin{hypothesis}
\label{hyp:prop} We consider perturbations of the form $V=\sum_{lm} \tV_{lm}(y) j_l^{-1}j_m^{-1}\phi_l\otimes\phi_m$, where the summation is done over the propagating modes only. 
\end{hypothesis}
Such a perturbation is local in $y$ but not local in $x$ since its integral (Schwartz) kernel is given by $\sum_{lm} \tV_{lm}(y) j_l^{-1}j_m^{-1}\phi_l(x')\otimes \phi_m(x)$. Assuming that $\tV_{lm}(y)$ is bounded, we obtain that $V$ is a relatively compact perturbation of $H_{0v}$ as introduced in the preceding section. As a consequence, $H_{0v}+tV$ for $t\in[0,1]$ provides a continuous path of Fredholm operators linking $H_{0v}$ and $H_v$.

With this convenient simplifying assumption, we obtain the system of equations
\begin{equation}\label{eq:alpha}
  \epsilon_l \alpha_l' (y) + i \dsum_{m} e^{i(\zeta_m-\zeta_{l})y} \tV_{lm}(y) \alpha_{m}(y) =0,
\end{equation}
where the summation is understood over all propagating modes $1\leq m\leq \mn$ with $\mn=\mn(E)$ an odd number. Since $V$ is Hermitian, we find that $\bar \tV_{lm}=\tV_{ml}$ in the above expression. Considering the complex conjugate equation, we obtain that
\begin{displaymath}
  \epsilon_l(\bar \alpha_l\alpha_l'+\alpha_l\bar\alpha_l') + i\sum_m [e^{i(\zeta_m-\zeta_{l})y}\tV_{lm}\bar\alpha_l\alpha_m - e^{i(\zeta_l-\zeta_{m})y}\bar \tV_{lm}\bar\alpha_m\alpha_l]=0.
\end{displaymath}
Summing over $j$ and using $\bar \tV_{lm}=\tV_{ml}$, we find the current conservation
\begin{displaymath}
  \dsum_l \epsilon_l |\alpha_l|^2(y) \quad \mbox{ is constant in $y$}.
\end{displaymath}
On a given interval $(0,L)$, this translates into
\begin{equation}\label{eq:currentconservation}
  \dsum_k |\alpha_{k+}|^2(L) + |\alpha_{k-}|^2(0) = \dsum_k |\alpha_{k-}|^2(L) + |\alpha_{k+}|^2(0).
\end{equation}
Here, the summation is over all $\mn$ propagating modes knowing that there is one $0+$ mode but  no $0-$ mode (so that $\alpha_{0-}(y)\equiv0$, say).

\subsection{Summary on propagating modes}
\label{sec:summary}

To sum up the preceding derivation, we constructed a finite number of unperturbed propagating modes $\varphi_l(x,y)$ carried by the Hamiltonian $\hplus$. Such modes are characterized by a current $\epsilon_l j_l$ with $\epsilon_l=\pm1$ describing the direction of propagation of the mode. They are of the form 
\begin{equation}
\label{eq:propmode}
  \varphi_l(x,y) = \phi_l(x) e^{i\zeta_l y}
\end{equation}
with $\zeta_l\in\Rm$ a phase velocity with the same sign as $\epsilon_l$. The transverse component $\phi_l(x)$ is normalized so its current in the $y$ direction $(\phi_l,\sigma_3\phi_l)=\epsilon_l$.

In the decomposition of $\hplus$, we find one mode $l=0$ such that $\zeta_0=E$ (hereafter called the zero mode) and pairs of modes $l=k\pm $ with the same phase $|\zeta_l|$ and current $j_l$ and  propagating in opposite directions $\epsilon_{k\pm }=\pm1$ (hereafter called pairs of non-zero modes). There are therefore $\mnplus$ propagating modes for $\mnplus$ an odd number.

The decomposition of $-\hplus$, or equivalently $\bar \hplus$, provides a similar decomposition with one mode associated with a phase velocity equal to $-E$ and pairs of modes with opposite phase velocities and currents.

We can similarly consider the decomposition of the Hamiltonian $\hminus$. The only difference with respect to the decomposition of $\hplus$ is that no mode $l=0$ exists as $0$ is not an eigenvalue of either $\faminus$ or $\faminusstar$. As a consequence, we find $\mnminus$ propagating modes of the form \eqref{eq:propmode} coming in pairs, where $\mnminus$ is now an even number.

This provides a full mode decomposition for the operator $H_0$ in \eqref{eq:H0}, with a total of propagating modes equal to $\mn_{{\cal N}}=(\Mtop+\Ntop)\mnplus+(\Mtriv+\Ntriv)\mnminus$.

%
\subsection{Scattering framework}

We focus here on the propagating modes of the operator $\hplus$ and set $\mn=\mnplus$ to simplify notation. All scattering operators are obtained under the coupling assumption of Hypothesis \ref{hyp:prop}.

Let us consider first the case $E^2<\eps_1$ so that there is only one propagating mode $\mn=1$. In that case, we find that 
\begin{displaymath}
  \alpha_0'(y) + i \tV(y) \alpha_0(y) =0,
\end{displaymath}
so that $\alpha_0(y)$ is given by the exponential of the integral of $i\tV(y)$. The random perturbation $\tV(y)$ translates into a mere phase shift and $|\alpha_0|^2(y)$ is constant. This corresponds to a setting with a perfect transmission and no backscattering. It is the mode that is typically referred to as topologically protected \cite{RevModPhys.82.3045,RevModPhys.83.1057} since its transport properties are not affected by the presence of the random perturbation $\tV(y)$. Experimental evidence of such protection may be found, e.g., in \cite{khanikaev2013photonic,wang2009observation}. As the rest of the paper shows, this total absence of backscattering is related to the energy constraint $E^2<\eps_1$ and not only to the non-trivial topology.

Let us consider the topologically equivalent but practically less favorable case where $\eps_1<E^2<\eps_2$ so that we have three propagating modes, $\alpha_0$ and $\alpha_{\pm1}$. Current conservation implies that 
\begin{displaymath}
  |\alpha_0|^2(y) + |\alpha_1|^2(y) - |\alpha_{-1}|^2(y) \quad \mbox{ is conserved}.
\end{displaymath}
Even though the index of the regularization of the operator $\hplus$ is equal to $1$, the zero mode undergoes scattering and couples with the modes $\alpha_{\pm1}$. The scattering properties of such a problem are, however, very much affected by the non-trivial topology as we shall see below.

More generally, let us assume that there are $\mn$ propagating modes and let $\alpha$ be the vector of amplitudes $(\alpha_l)_{1\leq l\leq\mn}$, where the amplitudes with positive $\epsilon_l$ come first and the amplitudes with negative $\epsilon_l$ last. We assume that the amplitude of the zero mode $\alpha_0$ comes at the "center" of the vector $\alpha$. We recast the equation \eqref{eq:alpha} for $\alpha$ as
\begin{equation} \label{eq:NU}
  \epsilon \alpha' + i \mV \alpha=0,\qquad \mV=(\mV_{mn})_{m,n},\quad \mV_{mn}(y)=e^{-i(\zeta_m-\zeta_n)y} \tV_{mn}(y),
\end{equation}
for a Hermitian matrix $\mV$ of local multipliers with $\bar \tV_{mn}(y)=\tV_{nm}(y)$ and a diagonal matrix $\epsilon={\rm Diag}(\epsilon_l)$. For $\mn=2k+1$ the size of $\alpha$, we find that $\epsilon$ is the diagonal matrix with $k+1$ times $1$ and $k$ times $-1$ on its diagonal. There are $\mn$ linearly independent solutions to the above ODE, which we combine into the $\mn\times \mn$ matrix $P(y)$ such that $P(0)=I_{\mn}$ the $\mn\times \mn$ identity matrix. We then find that 
\begin{equation}\label{eq:prop}
  \epsilon P' + i\mV P=0,\qquad P(0)=I_\mn.
\end{equation}
$P$ is the propagation or transfer matrix. Note that $\alpha(y)=P(y)\alpha(0)$ as a defining property of $P(y)$.

Consider a slab given by $y\in[0,L]$ for some $L>0$. Let us now define the central element in the theory, namely the scattering matrix $S$
\begin{displaymath}
  S = \left(\begin{matrix}
  R_+ & T_- \\ T_+ & R_-
\end{matrix}\right),
\end{displaymath}
where $R_+$ is the $k\times (k+1)$ matrix of reflection of the modes from bottom (negative values of $y$) to bottom, $T_+$ is the $(k+1)\times(k+1)$ transmission matrix of the same modes to the top (positive values of $y$), $R_-$ is the $(k+1)\times k$ matrix of reflection of the modes from top to top and $T_-$ is the $k\times k$ matrix of transmission of those modes to the bottom. 

Let us consider the first column $\alpha$ of the scattering matrix. This corresponds to $\alpha(0)$ given by $1$ followed by $k$ zeros followed by $k$ reflection coefficient $r_j$ while $\alpha(L)$ is given by $k+1$ transmission coefficients $t_j$ followed by $k$ zeros. Conservation of current implies that
\begin{displaymath}
 1 - \dsum_j|r_j|^2 = \dsum_j |t_j|^2 \quad \mbox{ so that }\quad \dsum_j |S_{1j}|^2 =1.
\end{displaymath}
This holds for any column of $S$ as well as by linearity of the above equation for $P$ for any linear combination of columns of $S$. From this, we deduce the standard property that $S$ is a unitary matrix so that $S^{-1}=S^*$. This implies the relations
\begin{displaymath}
  R_+^*R_++T_+^*T_+ = I_{k+1},\quad R_-^*R_-+T_-^*T_- = I_{k},\quad R_+^*T_-+T_+^*R_-=0,\quad T_-^*R_++R_-^*T_+=0.
\end{displaymath}
We now state the main result of this section:
\begin{theorem}
\label{thm:scatTRB}
Consider first the setting $\Mtop=1$ and $\Ntop=0$. Let $E^2\not=\eps_k$ for $k\geq1$, and $\alpha$ be the vector of mode amplitudes with the first $k+1$ components propagating toward increasing values of $y$ and the last $k$ components toward decreasing values of $y$. Consider a random slab of thickness $L$, i.e., $V(y)=0$ outside of $[0,L]$.

Then there is a (non-trivial) incoming vector $\tau=(\tau_j)_{1\leq j\leq k+1}$ that is purely transmitting, i.e., such that $R_+\tau=0$. 
Moreover, the conductance satisfies ${\rm Tr} (T_+^*T_+)\geq1$.

For the general case of $H_0$ in section \ref{sec:summary}, with $\mnplus=2k_++1$ and $\mnminus=2k_-$, we have $(\Mtop+\Ntop)k_++(\Mtriv+\Ntriv)k_-$ modes propagating in each direction with an extra $\Mtop$ modes propagating toward increasing values of $y$ and an  extra $\Ntop$ modes propagating toward decreasing values of $y$. Then, there are  $\ind H_v=\Mtop-\Ntop$ (non-trivial) linearly independent transmission vectors $\tau^l$ that are transmitted without reflection ($R_+\tau^l=0$) and the conductance satisfies ${\rm Tr} (T_+^*T_+)\geq \Mtop-\Ntop$.
\end{theorem}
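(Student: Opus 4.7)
The plan is to exploit the dimensional asymmetry between the two halves of the scattering matrix, combined with unitarity, which itself is a direct consequence of the current conservation identity \eqref{eq:currentconservation}. Since all the scattering-theoretic machinery (unitarity of $S$, the four block relations $R_\pm^*R_\pm+T_\pm^*T_\pm=I$, etc.) is already in place, the argument reduces to a short exercise in linear algebra applied to a rectangular matrix.

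First I would set up the dimension count carefully. In the simple case $\Mtop=1$, $\Ntop=0$, among the $\mn=2k+1$ propagating modes of $\hplus$ there are $k+1$ with $\epsilon_l=+1$ and $k$ with $\epsilon_l=-1$; thus $R_+$ is $k\times(k+1)$, so $\ker R_+$ has dimension at least $1$. For the general $H_0$ of section~\ref{sec:summary}, each paired block contributes $k_+$ (respectively $k_-$) modes in each direction, while the $\Mtop$ unpaired blocks contribute only upward-moving modes and the $\Ntop$ blocks only downward-moving modes. Setting $N_+$ and $N_-$ for the total number of upward/downward propagating modes, one has $N_+-N_-=\Mtop-\Ntop$, and $R_+$ is an $N_-\times N_+$ matrix whose kernel has dimension at least $\Mtop-\Ntop$.

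Next I would produce the transmission-only vectors $\tau^l$. Pick any $\tau\in\ker R_+$; applying the unitarity identity $R_+^*R_++T_+^*T_+=I_{N_+}$ to $\tau$ yields $T_+^*T_+\,\tau=\tau$, i.e.\ $\|T_+\tau\|=\|\tau\|$, so $\tau$ is purely transmitted. Choosing a basis $\{\tau^l\}_{l=1}^{\Mtop-\Ntop}$ of $\ker R_+$ gives the desired linearly independent reflection-free incoming states.

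For the conductance lower bound I would simply compute the trace using unitarity. One has
\begin{equation*}
\mathrm{Tr}(T_+^*T_+)=\mathrm{Tr}(I_{N_+}-R_+^*R_+)=N_+-\mathrm{Tr}(R_+^*R_+).
\end{equation*}
Because $R_+$ is $N_-\times N_+$, the positive operator $R_+^*R_+$ has rank at most $N_-$, and each of its eigenvalues is bounded by $1$ (again from $R_+^*R_++T_+^*T_+=I_{N_+}\geq 0$), so $\mathrm{Tr}(R_+^*R_+)\leq N_-$. This yields $\mathrm{Tr}(T_+^*T_+)\geq N_+-N_-=\Mtop-\Ntop$, which specializes to $\geq 1$ in the initial case.

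The only genuinely nontrivial point — and the one I would emphasize — is that the scattering matrix is well-defined and unitary in this setup, and this follows from Hypothesis~\ref{hyp:prop} (which reduces the analysis to the finite propagating system \eqref{eq:alpha}) together with the current conservation \eqref{eq:currentconservation}, both already derived above. Everything else is the observation that a wide rectangular block of a unitary matrix must have a kernel, and the complementary block is then an isometry on that kernel. The main subtlety I would double-check is the bookkeeping of which modes are counted as $\epsilon_l=+1$ versus $\epsilon_l=-1$ in $H_0$, so that $N_+-N_-=\Mtop-\Ntop=\ind H_v$ indeed matches the Fredholm index computed in section~\ref{sec:TI}, confirming the bulk–edge-type statement that topology furnishes a guaranteed minimum number of transmission channels.
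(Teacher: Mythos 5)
Your proposal is correct and follows essentially the same route as the paper: both arguments rest on the rank deficiency of the rectangular block $R_+$ (kernel of dimension at least $N_+-N_-=\Mtop-\Ntop$) together with the unitarity relation $R_+^*R_++T_+^*T_+=I$, which makes $T_+$ an isometry on $\ker R_+$ and yields the trace bound. Your explicit computation $\mathrm{Tr}(T_+^*T_+)=N_+-\mathrm{Tr}(R_+^*R_+)\geq N_+-N_-$ is a marginally more direct phrasing of the paper's observation that $T_+^*T_+$ has at least $\Mtop-\Ntop$ unit eigenvalues, but the content is the same.
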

\begin{proof}
Since $R_+$ is a $k\times (k+1)$ matrix, it is at most of rank $k$. Let $R_j$ be the $1\leq j\leq k$ rows of $R_+$. Then there exists (at least) one (normalized) vector $\tau$ orthogonal to all of them. This vector satisfies $R_+\tau=0$ by construction. The reflection matrix $R_+$ can be obtained experimentally by means of $k$ scattering experiments so that $\tau$ is observable. Note that it depends on $V$ and hence is fluctuation dependent.

Recall the conservation $R_+^*R_++T_+^*T_+ = I_{k+1}$.  Since $R_+$ is of rank $k$ at most, we deduce that $1$ is necessarily an eigenvalue of the matrix $T_+^*T_+$.  As a consequence, there is necessarily transmission thanks to the topological protection, and physically, the Landauer formula \cite[(33)]{RevModPhys.69.731} implies that conductance is at least equal to $1$ (in those units). 

When the index is $\Mtop-\Ntop$, then the above picture generalizes to a setting with $\mn_{\cal N}$ propagating modes. There is an excess of $\Mtop-\Ntop$ modes propagating toward positive values of $y$ compared to those propagating in the opposite direction. As a consequence, we find that $T_+^*T_+$ admits at least $\Mtop-\Ntop$ eigenvalues equal to $1$ so that the conductance in the Landauer formula is at least equal to $\Mtop-\Ntop$. Moreover, we now find that $\Mtop-\Ntop$ (non-trivial, linearly independent) transmission vectors $\tau^l$ may be transmitted without reflection, which are the vectors orthogonal to the all the rows of $R_+$. 
This completes the proof of the result.
\end{proof}

This result shows that complete localization \cite{RevModPhys.69.731,FGPS-07}, which would correspond to an exponential decay of the eigenvalues of $T_+^*T_+$ as $L$ increases, cannot happen when $\ind H_v\geq1$. This is a direct consequence of the topological protection of the edge modes.

However, this does not mean that $T_+^*T_+$ is close to identity in the presence of randomness. We will show in the case $\Mtop=1$ and $k=1$ that significant reflection occurs in the presence of random fluctuations. Note that $R_-^*R_-$ can be of maximal rank $k$ and so we do not expect the eigenvalues of $T_-^*T_-$ to be large in the presence of randomness. Indeed, when $k=1$, we obtain that $T_-^*T_-$ goes to $0$ as $L$ increases in the presence of random fluctuations. So unless $k=0$, which occurs when $E^2<\eps_1$ in our model $\Mtop=1$, there is both significant transmission as well as significant reflection.

Significant transmission, i.e., ${\rm Tr} (T_+^*T_+)\geq1$, and in fact equal to $1$ asymptotically as we shall see in section \ref{sec:diff}, is topologically protected. For reflection, the picture is as follows. When $E^2<\eps_1$, then no reflection occurs when the protected mode is transmitted. Note that when $m(x)=\lambda x$, then $\eps_1$ is proportional to $\lambda$. As a consequence, a sharp transition between two material topologies is energetically favorable for the presence of one and only one edge mode.  This is protected energetically, not topologically. When $E^2>\eps_1$, then significant reflection occurs when the unperturbed protected mode is transmitted. However, there is a random vector, which plays the role of the protected mode in the random environment and is purely transmitted. In that sense (and in that sense only), the absence of reflection is also topologically protected.

\section{Scattering theory in the TRS setting}
\label{sec:scatTR}

Consider an operator $H_0=\hplus^{\oplus \Mtop} \oplus \bar \hplus^{\oplus \Mtop}$ satisfying the TR symmetry $\theta H_0\theta^{-1}=H_0$. For concreteness, we assume $\Mtop=1$, with obvious generalizations to the case $\Mtop>1$. Recalling the solutions of $(\hplus-E)\psi=0$ in the preceding section, we obtain the following solutions for $(H_0-E)\psi=0$ given by 
\begin{displaymath}
  \varphi_l(x,y) = \left(\begin{matrix}
  \phi_l(x) \\ 0 
\end{matrix}\right) e^{i\zeta_l y}\quad \mbox{ and } \quad \theta\varphi_l = \left(\begin{matrix}
  0 \\ -\phi_l(x) 
\end{matrix}\right) e^{-i\zeta_ly}.
\end{displaymath}
Here, each $\phi_l$ is a two-vector while $\varphi_l$ and $\theta\varphi_l$ are now four-vectors. We verify that $\varphi_l$ and $\theta\varphi_l$ are orthogonal vectors associated to the same energy $E$. This is the standard Kramers degeneracy.

Indeed, for $H_0\psi=E\psi$, we have $\theta H_0\psi=\theta H_0\theta^{-1}\theta \psi = H_0\theta\psi=E\theta\psi$ so $\theta\psi$ is another eigenvector of $H_0$ with energy $E$. Now $(\psi,\theta\psi)=(\psi,\mT\bar\psi)=-(\mT\psi,\bar\psi)=-(\psi,\mT\bar\psi)=-(\psi,\theta\psi)=0$. More generally, $(\theta a,\theta b)=(a,b)$ by the same reasoning.

It is this orthogonality of time reversed modes for $\theta^2=-1$ that is responsible for the topological protection when $\Mtop=1\ \rm mod\  2$.

Let us now look for solutions of $H\psi=E\psi$ with $H=H_0+V$ and use the decomposition
\begin{displaymath}
  \psi(x,y) = \dsum_l a_l(y) \varphi_l(x,y) + b_l(y) \theta\varphi_l(x,y),\quad
  \theta\psi(x,y) = \dsum_l \bar a_l(y) \theta\varphi_l(x,y)  -\bar b_l(y) \varphi_l(x,y) .
\end{displaymath}
Here, as above, we use the anti-linearity of the map $\theta$. 

Using the same method as earlier, we find the equations for the above amplitudes
\begin{displaymath}
  \begin{array}{rcl}
   \epsilon_l a_l' &=& -i \dsum_m (\varphi_l,V\varphi_m) a_m + (\varphi_l,V\theta\varphi_m) b_m \\[3mm]
   -\epsilon_l b_l' &=& -i \dsum_m (\theta\varphi_l,V\varphi_m) a_m + (\theta\varphi_l,V\theta\varphi_m) b_m.
\end{array}
\end{displaymath}
We recast these as 
\begin{displaymath}
  \epsilon_ma_m' = -i A_{ml}a_l-i\bar B_{ml}b_l,\quad -\epsilon_mb_m' = iB_{ml}a_n-i\bar A_{ml}b_n,
\end{displaymath}
for $\mn\times \mn$ matrices $A=A^*$ and $B=-B^T$ so that $2\mn$ is the total number of propagating modes. We assume as before that the TR symmetric matrix $V$ is chosen so that only the propagating modes are coupled and are independent of the evanescent modes, which we neglect here. We then have the current conservation
\begin{displaymath}
  \Big(\sum_m \epsilon_m|a_m|^2 - \epsilon_m |b_m|^2\Big)'=0.
\end{displaymath}
With $\epsilon$ the diagonal matrix with $\epsilon_m$ as its entries (twice), we have
\begin{displaymath}
  \epsilon \left(\begin{matrix}
 a\\b
\end{matrix}\right)' = -i M \left(\begin{matrix}
  a\\b
\end{matrix}\right), \quad M =  \left(\begin{matrix}
  A & \bar B \\ B & -\bar A
\end{matrix}\right),\quad MJ=-J\bar M,\quad J=\left(\begin{matrix}
0&1\\-1&0
\end{matrix}\right).
\end{displaymath}
With $c=(a,b)^t$, we find $\theta(\epsilon c)'=i\theta(Mc)=iJ\bar M\bar c=-iMJ\bar c=-iM\theta c$ so that $c$ and $\theta c$ solve the same equation, where $\theta=J\mK$ in the above representation since $J\equiv i\sigma_2$. So, the property that the equation is TR symmetric remains true for the one-dimensional scattering equation for $c(y)$. We still use the notation $\theta$ to represent that symmetry.

Let $Q$ be the solution of the equation for $c$ with initial conditions given by $(I \ 0)^t$ for $I$ the $\mn\times \mn$ identity matrix. We then find that $-\theta Q$ is the solution with initial conditions given by $(0,I)^t$. As a consequence, the transfer matrix $P$ solution of the above equation with $I_{2\mn\times 2\mn}$ initial conditions is given by
\begin{displaymath}
  P = (Q\ \ -\theta Q), \qquad Q=\left(\begin{matrix}
\alpha\\\beta
\end{matrix}\right), \qquad P = \left(\begin{matrix}
 \alpha & -\bar\beta \\ \beta&\bar \alpha
\end{matrix}\right).
\end{displaymath}
We recover that $P$ is a complex representation of a matrix of quaternions, which is a standard representation of transfer matrices of problems with $\theta^2=-1$ TR symmetry.

Let us consider a representation where the first $N$ components of $c$ are the forward propagating modes $f(z)$ corresponding to positive values of $\zeta_l$ whereas the last $N$ components of $c$ are the backward propagating modes $b(z)$ with negative values of $\zeta_l$. In this choice of basis (obtained by a permutation from the original basis), we have 
\begin{displaymath}
  \left(\begin{matrix}
  f(L) \\ b(L) 
\end{matrix}\right) = P(L) \left(\begin{matrix}
  f(0)\\b(0)
\end{matrix}\right).
\end{displaymath}
We still use the notation $P$ for the transfer matrix in this new basis. The scattering matrix is then given by
\begin{displaymath}
  S=S(L) =\left(\begin{matrix}
  R_+ & T_- \\ T_+ & R_-
\end{matrix}\right)\quad \mbox{ so that } \quad \left(\begin{matrix}
  b(0) \\ f(L) 
\end{matrix}\right) = S \left(\begin{matrix}
  f(0) \\ b(L)
\end{matrix}\right)
\end{displaymath}
for all possible solution $c(z)=(f(z),b(z))^t$ of the above equation. The matrix $S$ is unitary as we already observed. However, thanks to the TR symmetry, it satisfies additional properties. We find
\begin{displaymath}
  c(z) = \left(\begin{matrix}
 f(z) \\ b(z)
\end{matrix}\right),\quad \theta c(z) = \left(\begin{matrix}
 \bar b(z) \\ -\bar f(z)
\end{matrix}\right)\quad \mbox{ so that } \left(\begin{matrix}
 -\bar f(0) \\ \bar b(L) 
\end{matrix}\right) = S \left(\begin{matrix}
 \bar b(0) \\ -\bar f(L)
\end{matrix}\right)
\end{displaymath}
and hence, with $\sigma_3={\rm diag(I_{\mn},-I_{\mn})}$ (i.e., really $\sigma_3\otimes I_{\mn}$)
\begin{displaymath}
  -\sigma_3 \left(\begin{matrix}
  -f(0) \\ b(L) 
\end{matrix}\right) = -\sigma_3 \bar S \sigma_3 \sigma_3 \left(\begin{matrix}
  b(0) \\ -f(L)
\end{matrix}\right)\ \mbox{ so } \  \left(\begin{matrix}
  f(0) \\ b(L)
\end{matrix}\right) = -\sigma_3 \bar S \sigma_3 \left(\begin{matrix}
  b(0) \\ f(L) 
\end{matrix}\right) = S^{-1} \left(\begin{matrix}
  b(0) \\ f(L) 
\end{matrix}\right).
\end{displaymath}
Since $S$ is unitary, we deduce that $S^{-1}=S^*=-\sigma_3 \bar S \sigma_3$ so that $S^T=-\sigma_3 S \sigma_3$. In other words, the matrix $\tilde S=S\sigma_3$ is skew symmetric: $\tilde S^T=-\tilde S$. This is the representation of $S$ as a skew-symmetric matrix in an appropriate basis; see \cite{1751-8121-41-40-405203,RevModPhys.69.731}. 
The above representation is as directly useful as the skew-symmetric one. Indeed, we find that 
\begin{displaymath}
  S^T = \left(\begin{matrix}
 R_+^T & T_+^T \\ T_-^T & R_-^T
\end{matrix}\right) = -\sigma_3 S \sigma_3 = \left(\begin{matrix}
  -R_+ & T_- \\ T_+ & -R_-
\end{matrix}\right).
\end{displaymath}
Thus, $R_+$ and $R_-$ are skew-symmetric and $T_+$ and $T_-$ are transpose to each-other.

\begin{theorem}
\label{thm:scatTRS}
   Let $E^2\not=\eps_k$ for all $k\geq1$ and let $2\mn$ be the number of propagating modes in a TRS environment with $2(\Mtop+M)$ blocks. Then, when $\ind_2 H_v= \Mtop${\rm mod} $2=1$, there is a non-trivial incoming vector $\tau_+=(\tau_j)_{1\leq j\leq\mn}$ such that $R_+\tau_+=0$. Similarly, there is a non-trivial incoming vector $\tau_-$ such that $R_-\tau_-=0$. Moreover, the conductances satisfy $\rm{Tr}(T_+^*T_+)\geq \Mtop${\rm mod} $2$ and $\rm{Tr}(T_-^*T_-)\geq \Mtop${\rm mod} $2$. 
\end{theorem}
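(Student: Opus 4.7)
The plan is to follow the strategy of Theorem \ref{thm:scatTRB}, but exploit the additional skew-symmetry of the reflection blocks $R_\pm$ coming from the fermionic TRS. First, I recall the structural identity derived immediately before the statement: the scattering matrix satisfies $S^T=-\sigma_3 S \sigma_3$, which in block form reads
\begin{displaymath}
 R_+^T=-R_+,\qquad R_-^T=-R_-,\qquad T_-^T=T_+.
\end{displaymath}
Thus $R_+$ and $R_-$ are skew-symmetric $\mn\times\mn$ matrices. The key reduction will then be to a parity statement: $\mn$ is odd whenever $\Mtop$ is odd, so that $\det R_\pm=0$ automatically and the kernels of $R_\pm$ are non-trivial.

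To establish the parity claim, I would count propagating modes block by block as in Section \ref{sec:summary}. Each block $\hplus$ contributes $\mnplus=2k+1$ propagating modes for a given $E$, with $k+1$ going up and $k$ going down; each conjugate block $\bar \hplus$ contributes the mirror image ($k$ up, $k+1$ down). Hence one TRS pair $\hplus\oplus\bar\hplus$ contributes $\mnplus$ up-movers, an odd number. A trivial pair $\hminus\oplus\bar\hminus$ contributes $\mnminus$ up-movers, an even number. Therefore $\mn=\Mtop\,\mnplus+M\,\mnminus$ has the same parity as $\Mtop$. In particular, under the hypothesis $\Mtop\bmod 2=1$, the integer $\mn$ is odd.

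With this in hand, the existence of the non-trivial transmission vectors follows: an odd-dimensional skew-symmetric matrix $R_+$ satisfies $\det R_+ = \det R_+^T = \det(-R_+) = -\det R_+$, so $\det R_+ = 0$. Pick any unit $\tau_+\in\ker R_+$; by construction $R_+\tau_+=0$. The same argument gives $\tau_-\in\ker R_-\setminus\{0\}$. For the conductance bound, I recall from unitarity of $S$ that $R_+^*R_+ + T_+^*T_+ = I_\mn$. Applying both sides to $\tau_+$ gives $T_+^*T_+\tau_+=\tau_+$, so $1$ is an eigenvalue of $T_+^*T_+$ and
\begin{displaymath}
 {\rm Tr}(T_+^*T_+)\geq 1=\Mtop\bmod 2.
\end{displaymath}
The bound on $T_-^*T_-$ is identical, using $R_-^*R_- + T_-^*T_- = I_\mn$ and $\tau_-$.

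The step I expect to require the most care is the parity bookkeeping for $\mn$: one must check that the regularization introduced in Section \ref{sec:mod2} (alternating signs of $v$ within each pair) does not disturb the real-energy mode count, and that the decomposition of $H_0$ in the TRS setting \eqref{eq:H0TRS} really produces one un-paired zero mode per topological block (contributing the crucial odd contribution). Once this is verified, the rest of the argument is a clean consequence of skew-symmetry and unitarity, mirroring the proof of Theorem \ref{thm:scatTRB} with $\Mtop-\Ntop$ replaced by the $\Zm_2$-valued quantity $\Mtop\bmod 2$.
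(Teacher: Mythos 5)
Your proposal is correct and follows essentially the same route as the paper: skew-symmetry of $R_\pm$ from the TRS constraint $S^T=-\sigma_3 S\sigma_3$, the parity observation that $\mn$ is odd exactly when $\Mtop$ is odd, hence a rank-deficient reflection matrix, and the conductance bound via $R_\pm^*R_\pm+T_\pm^*T_\pm=I$. The only (harmless) difference is that you establish the rank deficiency by the elementary identity $\det R_+=\det(-R_+^T)=-\det R_+$ in odd dimension, whereas the paper invokes the normal form of skew-symmetric matrices; your mode-count bookkeeping is, if anything, more explicit than the paper's.
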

\begin{proof}
As we noted earlier,
\begin{displaymath}
  T_+^*T_+ + R_+^*R_+ = I,
\end{displaymath}
with a symmetric expression for $R_-$ and $T_-$. The above matrices are $\mn\times \mn$ and no standard index prevents $T$ from being negligible as the thickness $L$ increases. However, the mod 2 index does prevent this from happening. Indeed, the number of modes $\mn$ is equal to $\Mtop+2M$ since all non-protected modes come in TR pairs. When $\Mtop$ is even, then $\mn$ is even as well. However, when $\Mtop$ is odd, then the skew symmetric matrix $R_+$ is a matrix with odd dimension. In such a case, its rank is at most $\mn-1$ and we again obtain that $1$ is an eigenvalue of the matrix $T_+^*T_+$. That $R_+$ is at  most of rank $\mn-1$ when $\Mtop$ mod $2=1$ comes from the diagonalization property of such matrices  \cite{1751-8121-41-40-405203,schulz2013z_2,Y-CMS-61}. In other words, localization is not possible and there exists a vector $\tau_+$ as indicated above. Results for $\tau_-$ in the opposite direction are obtained by symmetry.
\end{proof}

From the skew-symmetry of $R_+$ and $R_-$, we deduce the absence of backscattering from one mode into its time reversed alter-ego, a well-documented feature in the literature \cite{doi:10.1143/JPSJ.67.2857,PhysRevLett.95.146802}. However, as in the TRS breaking setting, while transmission is protected, an amount of back-scattering is inevitable unless the protected mode is one of the modes $\tau_\pm$ defined in the above theorem.


\section{Diffusion approximation}
\label{sec:diff}

The diffusion approximation is a macroscopic model that allows us to understand the large distance influence of (typically one-dimensional) random perturbations in differential equations. In the setting of this paper, it is based on assessing the limiting behavior of $S$ and $P$ introduced in the preceding section as the thickness $L$ of the random medium increases. The reader is referred to \cite[Chapters 6\&7]{FGPS-07} for the main results and history of the diffusion approximation, in which the reflection and transmission coefficients are modeled as stochastic diffusions. The method also bears some similarities with the random matrix theory of quantum transport \cite{RevModPhys.69.731,RevModPhys.87.1037}.

%
\subsection{Classes of transmitting or localizing systems}

In the preceding sections, we showed that edge Hamiltonians belonged to separate classes depending on the value of $\ind\ H_v = \Mtop-\Ntop$ in \eqref{eq:index} and on the value of $\ind_2 H = \Mtop\ {\rm mod} \ 2$ in \eqref{eq:Z2} in the TRS setting. For random fluctuations coupling propagating modes as described in Hypothesis \ref{hyp:prop}, which we assume also holds for the rest of the section, we showed that the conductance along the edge was bounded from below by these two indices. In the topologically non-trivial cases, we therefore obtain that transport is guaranteed no matter how strong the random fluctuations $V$ are. In other words, complete (Anderson) localization, which may be characterized by asymptotically vanishing transport (conductance) as the strength of the randomness increases \cite{FGPS-07}, is not possible. This is consistent with results obtained in \cite[Theorem 6.6.3]{prodan2016bulk} for general classes of Hamiltonians and under the assumption of a mobility edge constraint that we do not need to verify in this paper. Heuristically, we expect that non-trivial topologies generate an obstruction to (complete) localization by forcing the presence of delocalized modes; see \cite[Section 6]{prodan2016bulk} for a thorough discussion.

\medskip

We now prove a more precise result: for an appropriate form of the disorder $V$ described in detail below, the edge Hamiltonians are homotopic to a setting in which the conductance is asymptotically, as the thickness $L$ of the random domain increases, {\em exactly} equal to $\ind\ H_v = \Mtop-\Ntop$ in general and $\ind_2 H = \Mtop\ {\rm mod} \ 2$ in the TRS setting. When the latter vanish, we show that conductance vanishes exponentially as $L$ increases, which is a hallmark of wave localization in one space dimension. In the case of non-trivial indices, the physical explanation is therefore that a number of protected modes equal to the non-trivial index is allowed to propagate while everything else (Anderson) localizes. We thus have a direct sum of the space of propagating modes into one component observing free transport and an orthogonal component that fully localizes.

Returning to the question of back-scattering, we show that when $E^2>\eps_1$, the unperturbed zero modes do undergo scattering, for which we obtain a quantitative description. This implies that the back-scattering free modes, which are realization  dependent, need to be defined  as described in the preceding section.

It is quite likely that the aforementioned results hold for a much larger class of random fluctuations than the ones we are considering here. However, this would require additional unknown results on the diffusion approximation that we do not develop here.

\paragraph{Scaling and choice of coupling operator.} 
The specific random fluctuations we consider to  asymptotically obtain transport of protected modes and localization of all other modes is as follows. Recall that the mode propagation is governed by \eqref{eq:NU} and the coupling to the random fluctuations by the local multipliers $\tV_{mn}(y)$. Although this is not necessary, we assume for simplicity that $\tV_{mn}(y)=W_{mn}\tV(y)$, i.e., that all couplings see a unique scalar-valued random field $\tV(y)$.

We now need to perform two choices: find which coupling coefficients $W_{mn}$ are non-zero and then find the dependency of $\tV(y)$ in the $y$ variable that leads to the diffusion approximation. Let $0<\eps\ll1$. We assume the slab of propagation through random heterogeneities of size $L_\eps=\eps^{-\frac12}L$ (for some $L>0$) large compared to the frequencies $\zeta_m$ (of order $O(1)$) of the propagating modes. We also assume that the random fluctuations oscillate rapidly at the scale $\sqrt\eps$ and thus are of the form $\tV(\eps^{-\frac12}y)$. Upon rescaling $y\to \eps^{-\frac12}y$, we thus obtain that the propagator in \eqref{eq:NU} is given by
\begin{equation}
 	\label{eq:Veps}
 	\epsilon P_\eps' + i \mV_\eps P_\eps=0,\quad \mV_\eps = (\mV_{\eps mn})_{m,n},\quad \mV_{\eps mn}(y) = e^{-i(\zeta_m-\zeta_n)\frac y{\sqrt\eps}} W_{mn} \dfrac{1}{\sqrt\eps}  \tV(\frac y\eps). 
\end{equation}
The above regime is one of the classical asymptotic regimes leading to diffusions since for a mean-zero sufficiently mixing process $\tV$, we obtain that $\eps^{-\frac12}\tV(\eps^{-1}y)$ converges to a rescaled Brownian motion as $\eps\to0$. The above scaling, corresponding to a domain of size $\eps^{-\frac12}$ and a scale of the random fluctuations (correlation length) also of $\eps^{\frac12}$ leads to the computationally least complicated of the regimes to which diffusion approximations can be applied;  see \cite[Chapters 6]{FGPS-07}, where our $\sqrt\eps$ is denoted $\eps$. Our assumption on $\tV$ is as follows.
\begin{hypothesis}\label{hyp:random}
 The process $\tV(y)$ is a mean-zero, stationary (homogeneous), ergodic process such that its infinitesimal generator ${\cal L}_\tV$ satisfies the Fredholm alternative; i.e., for any bounded function $f$ that is centered, i.e., $\E\{f(\tV(y))\}=\E\{f(\tV(0))\}=0$, then the problem ${\cal L}_{\tV}u(y)=f(y)$ admits a unique bounded solution with $\E\{u(\tV(0))\}=0$.
\end{hypothesis}
For instance, any (stationary) Ornstein-Uhlenbeck process satisfies the above constraint; see \cite[Section 6.3.3]{FGPS-07}.

\medskip

It remains to choose the coupling coefficients of the operator $V$, given above by the scalar terms $W_{mn}$, that are non-vanishing. This is done as follows. For the Hamiltonian in \eqref{eq:edgeH} with unperturbed Hamiltonian given in \eqref{eq:H0}, and following the results of section \ref{sec:scat}, we obtain $\mnplus$ propagating modes for each block $\hplus$ or $-\hplus$ (or equivalently $\bar \hplus$) and $\mnminus$ propagating modes for eack block $\hminus$ and $-\hminus$ (or equivalently $\bar \hminus$). Note that $\mnplus$ is odd while $\mnminus$ is even. Now we choose $W_{mn}$ to couple the propagating modes into $2\times2$ or $3\times3$ systems of equations as follows. 

Let $\varphi_1$ and $\varphi_2$ be two such (unperturbed) modes with respective currents $\epsilon_1j_1$ and $\epsilon_2j_2$. Then a coupling between the two modes is obtained by defining $0<\sqrt{\gamma_{ij}}=W_{ij}$ for $1\leq i,j\leq 2$ with $\gamma_{12}=\gamma_{21}$ and $W_{1j}=W_{2j}=W_{j1}= W_{j2} =0$ otherwise. It will be convenient to use the coefficients $\gamma_{ij}$ in the following. The coefficients $W_{ij}$ need not be positive but since only their square appears in the diffusive limit, we assume the above form for concreteness.
%
%

Let $\alpha_1$ and $\alpha_2$ be the corresponding amplitudes of these modes. They satisfy the coupled system, with $\tV_\eps(y) = \frac{1}{\sqrt\eps} \tV(\frac y\eps)$,
\begin{equation}\label{eq:2by2}
  \left(\begin{matrix}
  \epsilon_1\alpha_1 \\ \epsilon_2\alpha_2
\end{matrix}\right) '+ i\tV_\eps(y)\left(\begin{matrix}
  \sqrt{\gamma_{11}} & \sqrt{\gamma_{12}} e^{i(\zeta_1-\zeta_2)\frac{y}{\sqrt\eps}} \\
  \sqrt{\gamma_{12}} e^{i(\zeta_2-\zeta_1)\frac{y}{\sqrt\eps}} & \sqrt{\gamma_{22}}
\end{matrix}\right)\left(\begin{matrix}
  \alpha_1\\\alpha_2
\end{matrix}\right)=0 .
\end{equation}

The generalization to the coupling of $3$ modes $\varphi_j$ for $1\leq j\leq 3$ (or more modes, but for which we do not know how to analyze the diffusion approximation) is then straightforward, 
with a choice of coefficients $W_{ij}$ non-vanishing only for both $1\leq i,j\leq 3$ with corresponding $\gamma_{ij}=\gamma_{ji}$.
The above coupled system of equations for $(\alpha_j)_{1\leq j\leq 3}$ is then modified accordingly.

\medskip

\noindent {\bf General case without TRS.} Let $H=H_0+V$ be a general Hamiltonian as defined in \eqref{eq:edgeH} and let the energy $E$ be such that the Hamiltonian carries $\mn_{\cal N}=(\Mtop+\Ntop)\mnplus+(\Mtriv+\Ntriv)\mnminus$ propagating modes. We couple the modes (choose the coefficients $W_{mn}$) using the above $2\times2$ or $3\times3$ systems as follows. Without loss of generality, we assume $\Mtop\geq \Ntop$. For $\Ntop\geq1$, we pair $\Ntop$ zero modes of an operator $\hplus$ with that of an operator $\hminus$. For each of the $\Mtop-\Ntop$ remaining zero modes of the operators $\hplus$, we couple them with a pair of non-zero modes (with currents $\pm j_k$) if such modes are available. Each remaining pair of non-zero modes are finally coupled to form their own $2\times2$ systems.

So, our choice of Hermitian perturbation $V$ is such that each zero mode of $-\hplus$ is coupled with a  zero mode of a $\hplus$ operator. Each of the remaining $\Mtop-\Ntop$ zero modes is either part of a $3\times3$ system  if enough pairs of non-zero modes are available or solves a decoupled $1\times1$ system otherwise. Each remaining pair of non-zero modes finally solves a $2\times2$ system of equations. We then have the following result:
\begin{theorem}
\label{thm:diff}
Let $H$ be defined as in \eqref{eq:edgeH}, the perturbation $V=V_\eps$ as described above and the random process $\tV(y)$ satisfying hypothesis \ref{hyp:random}.  

Then, asymptotically as $\eps\to0$, the limiting scattering matrix $S=S(L)$ is such that ${\rm Tr}\ T_+^*T_+(L)$ converges to $\Mtop-\Ntop$ as $L\to\infty$. Moreover, let $\alpha_0$ be the amplitude of a zero-mode in a $3\times3$ system as described above. Then the resulting scattering matrix involves a reflection matrix $R_+=(r_1,r_2)$ such that $|r_1|^2+|r_2|^2$ converges to $1$ as $L\to\infty$. Finally, for some realizations of $\tV(y)$, $|r_1|^2$ and $|r_2|^2$ are as close to $1$ as one wishes (though not both at the same time).
\end{theorem}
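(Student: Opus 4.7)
The plan is to exploit the block structure built into the coupling $V_\eps$, apply the diffusion approximation of \cite[Chapters 6-7]{FGPS-07} to each independent block, and then analyze the large-$L$ behavior of the resulting limiting diffusions via their Lyapunov spectra.

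First I would observe that by the choice of the coefficients $W_{mn}$ described just before the theorem, the $\mn_{\cal N}$ propagating modes split into decoupled subsystems, which makes $P_\eps$ block-diagonal and hence $S_\eps$ block-diagonal up to a permutation of mode indices: (i) $\Ntop$ blocks of $2\times2$ type pairing a zero mode of a $-\hplus$ with a zero mode of an $\hplus$; (ii) $3\times3$ blocks pairing each of the remaining $\Mtop-\Ntop$ zero modes of $\hplus$ with a pair of non-zero modes of opposite currents (when such pairs are available); (iii) $1\times1$ blocks carrying any remaining unpaired zero mode; and (iv) $2\times2$ blocks pairing remaining non-zero modes (from any of $\hplus$, $-\hplus$, $\hminus$, $-\hminus$) of opposite currents. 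Consequently $\mathrm{Tr}\,T_+^*T_+$ is the sum of contributions from each block and it suffices to study each block separately.

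Second I would apply the diffusion approximation to each block. The rapid phases $e^{-i(\zeta_m-\zeta_n)y/\sqrt\eps}$ in $\mV_\eps$ combined with the scaling $\tV_\eps(y)=\eps^{-1/2}\tV(y/\eps)$ and Hypothesis \ref{hyp:random} produce, via a perturbed-test-function argument, a matrix-valued It\^o diffusion as the weak limit of $P_\eps(\cdot)$. For blocks of types (i) and (iv) the numbers of forward and backward modes are equal (one each), placing the system in the standard one-dimensional Anderson-localization regime: the single transmission coefficient decays exponentially in $L$ with a positive Lyapunov exponent (explicitly computable in terms of the $\gamma_{ij}$ and the spectral density of $\tV$), and such a block contributes $0$ asymptotically to the trace. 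For blocks of type (iii) the equation reduces to $\alpha_0'=-i\sqrt{\gamma_{00}}\tV_\eps\alpha_0$, a pure phase shift, so $|T|^2\equiv1$ and the block contributes exactly $1$ to the trace.

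The main obstacle is the $3\times3$ block of type (ii), which has two forward and one backward channel. Current conservation gives $R_+^*R_++T_+^*T_+=I_2$, and by Theorem \ref{thm:scatTRB} the $1\times2$ matrix $R_+$ has rank at most $1$, forcing one eigenvalue of $T_+^*T_+$ to equal $1$ identically (the topologically protected mode). The unitarity relation $R_+R_+^*+T_-T_-^*=1$ identifies the second eigenvalue of $T_+^*T_+$ with $|T_-|^2=1-(|r_1|^2+|r_2|^2)$. To show $|T_-|^2\to0$ as $L\to\infty$ I would parametrize $S_\eps$ in Cartan-like coordinates (polar decomposition of the transfer matrix, so that the singular values of $T_+$ are tracked separately from the unitary rotations) and derive the limiting SDE for the non-trivial singular value from the matrix It\^o diffusion; the imbalance $|n_+-n_-|=1$ yields a Lyapunov spectrum with exactly one zero and one positive exponent, producing exponential decay of the second transmission eigenvalue. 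This step, which parallels the DMPK analysis of quasi-one-dimensional disordered conductors, is the technical heart of the proof. Summing over all blocks then gives $\mathrm{Tr}\,T_+^*T_+(L)\to\Mtop-\Ntop$.

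Finally, $|r_1|^2+|r_2|^2=1-|T_-|^2\to1$ for the $3\times3$ block is a direct consequence of the previous step. The realization-dependent claim comes from the ergodicity of the limiting diffusion restricted to the unit direction spanning the one-dimensional range of $R_+^*$: this direction evolves on a compact state space (a complex projective line inside $\Cm^2$), and a Stroock--Varadhan support argument, using nondegeneracy of the coupling $\gamma_{ij}$, shows that it visits every open subset of its state space with positive probability. Hence there are realizations of $\tV(y)$ for which $|r_1|^2$ is arbitrarily close to $1$ and others for which $|r_2|^2$ is, while the asymptotic rank-one constraint $|r_1|^2+|r_2|^2=1$ prevents both from being near $1$ simultaneously.
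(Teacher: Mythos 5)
Your block decomposition, the treatment of the $1\times1$ and $2\times2$ subsystems, and the algebraic identification of the two eigenvalues of $T_+^*T_+$ for the $3\times3$ block as $1$ and $|T_-|^2=1-(|r_1|^2+|r_2|^2)$ all coincide with the paper's argument. Where you genuinely diverge is on the hard step, the $3\times3$ block. You propose a DMPK-style polar parametrization of the transfer matrix in the pseudo-unitary group of signature $(2,1)$ and an appeal to its Lyapunov spectrum ($\{\lambda,0,-\lambda\}$ with one forced zero exponent from the channel imbalance). The paper instead computes the limiting infinitesimal generator $\mL$ explicitly and discovers that it closes on the two reflection intensities $(d_1,d_2)=(|r_1|^2,|r_2|^2)$, yielding the degenerate diffusion \eqref{eq:gen3} on the triangle $\{d_1,d_2\ge0,\ d_1+d_2\le1\}$; under the symmetry $\gamma_{13}=\gamma_{23}$ (flagged in Remark \ref{rem:symgamma}) the sum $\rho=d_1+d_2$ is itself a closed one-dimensional diffusion and a one-line moment estimate, $\partial_L\E(1-\rho)\le-\E(1-\rho)$, gives the exponential convergence $\rho\to1$. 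Your route, if completed, would buy independence from that symmetry assumption; the paper's route buys an elementary, fully explicit proof.

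The one point you should not leave as stated is the assertion that the channel imbalance ``yields a Lyapunov spectrum with exactly one zero and one positive exponent.'' The group structure only forces the spectrum to be of the form $\{\lambda,0,-\lambda\}$ with $\lambda\ge0$; strict positivity of $\lambda$ is a Furstenberg-type non-degeneracy statement that must be verified for the specific noise here, which is highly degenerate (all couplings are driven by a single scalar field $\tV$, and the paper itself notes that the limiting generator is nowhere elliptic). This is exactly the step the paper's explicit computation replaces, so in your write-up it is the gap, not a routine citation. Your Stroock--Varadhan support argument for the final claim is a reasonable alternative to the paper's method, which instead exhibits the explicit invariant measure of the reduced diffusion in $d_1$ on the segment $d_1+d_2=1$ (uniform on $(0,1)$), from which the statement that $|r_1|^2$ and $|r_2|^2$ each come arbitrarily close to $1$, but never simultaneously, is immediate.
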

\begin{remark}
\label{rem:symgamma} The proof of the above theorem is given under the additional assumption that each $3\times3$ coupling matrix satisfies the additional assumption that $\gamma_{12}=\gamma_{13}$. A very similar proof works when $\gamma_{12}\not=\gamma_{13}$ but requires results for parabolic equations with degenerate coefficients at the domain's boundary that likely hold based on existing theories in \cite{EM-PUP-13,FGPS-07} but nonetheless not done in detail. See a footnote in the proof for additional details.
\end{remark}

The proof of the theorem will be given in sections \ref{sec:3by3} and \ref{sec:2by2} where the asymptotic analysis of the $3\times3$ and $2\times2$ systems are presented, respectively.

\medskip

\noindent {\bf Case where TRS holds.} Let us now assume that the Hamiltonian $H$ satisfies the TRS. We thus have $\Mtop=\Ntop$ and $\Mtriv=\Ntriv$. The main difference with respect to the preceding case is that $V$ now needs to satisfy the TRS constraints indicated in \eqref{eq:HTRS}. Let us consider the modes $\varphi$ of an operator $\hplus$ and those $\theta\varphi$ of its conjugate $\bar \hplus$. Such modes cannot be coupled by random perturbations since they are orthogonal by TRS independently of the random fluctuations. The coupling $V$ thus needs to be modified accordingly.

We write $\Mtop=2p+\eta$ with $p\geq0$ and $\eta=0$ or $\eta=1$. For $p\geq1$, consider a pair of operators $\hplus$ and their complex conjugate $\bar \hplus$. Restricted to these sub-blocks, the Hamiltonian takes the form
\begin{displaymath}
   \left(\begin{matrix}
   \hplus+V_{11} & V_{12}^* & -\bar V_{13} & V_{14}^* \\
   V_{12} & \hplus+V_{22} &  - \bar V_{14} & -\bar V_{24}\\ V_{13} & -V_{14}^T & \bar \hplus + \bar V_{11} & \bar V_{12}^* \\
   V_{14} & V_{24} &  \bar V_{12} & \bar \hplus+\bar V_{22}
\end{matrix}\right).
\end{displaymath}
Here, $V_{ij}$ for $1\leq i,j\leq2$ are arbitrary Hermitian, while $V_{13}=-V_{13}^T$ and $V_{24}=-V_{24}^T$ and we verify that the constraint \eqref{eq:HTRS} implies that $V_{23}=-V_{14}^T$ as indicated above for $V_{14}$ an arbitrary Hermitian $2\times2$ matrix. The coupling between the zero mode of the first $\hplus$ with that of the first $\bar \hplus$ is therefore prohibited since the $(1,1)$ entry of $V_{13}$ needs to vanish. We already know this as these zero modes are orthogonal independently of $V$. We therefore set $V_{13}=V_{24}=0$. 

Since $V_{14}$ is arbitrary, however, we choose it so that the zero mode of the first $\hplus$ and the zero mode of the second $\bar \hplus$ are coupled. Note that the zero mode of the second $\hplus$ is now coupled to the zero mode of the first $\bar \hplus$ with the same coupling constant up to a sign, which has no influence on the final result. It remains to choose $V_{11}$ and $V_{22}$ such that all the non-zero modes of each of these blocks are coupled as we did in the TRS breaking setting.

When $\eta=0$, all propagating modes have been coupled into $2\times2$ systems of the form \eqref{eq:2by2}. When $\eta=1$, the remaining block may be obtained by retaining the first and third rows and columns in the above matrix. We then set $V_{13}$ to $0$ and are left with two uncoupled blocks. The fluctuations $V_{11}$ are then chosen as in the TRS breaking case: unless $\mnplus=1$ for the block, the zero-mode is coupled with an other pair of non-zero modes to form a $3\times3$ system. Each remaining pair of non-zero modes is coupled into a $2\times2$ system. Then we have:
\begin{theorem}
\label{thm:diffTRS} Let $H$ be a TRS Hamiltonian and $V=V_\eps$ chosen as above. Then the conclusions of Theorem \ref{thm:diff} hold with $\Mtop-\Ntop$ replaced by $\Mtop$ mod $2$.
\end{theorem}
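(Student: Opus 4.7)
The plan is to reuse the block-by-block diffusion analysis of Theorem \ref{thm:diff}, modifying only the combinatorics of how modes are grouped. Since the asymptotic behavior of $2\times 2$ and $3\times 3$ coupled systems is already handled in sections \ref{sec:2by2} and \ref{sec:3by3}, the task reduces to showing that the TRS-constrained coupling described in the paragraphs preceding the theorem produces exactly $\eta := \Mtop \bmod 2$ unpaired forward zero modes in the final decomposition.

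First I would verify the block decomposition. Kramers degeneracy makes the forward zero mode $\varphi_0$ of any $\hplus$ orthogonal to its TRS partner $\theta\varphi_0$ (the backward zero mode of $\bar\hplus$), forcing the relevant entry of $V_{13}$ to vanish and prohibiting the natural within-pair coupling between TRS-conjugate zero modes. Writing $\Mtop = 2p + \eta$, one pairs the $2p$ copies of $\hplus$ and $2p$ copies of $\bar\hplus$ into $p$ sub-blocks of four operators and uses the TRS-allowed off-diagonal block $V_{14}$ to couple the forward zero mode of the first $\hplus$ to the backward zero mode of the second $\bar\hplus$, and symmetrically for the second $\hplus$ and first $\bar\hplus$. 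This yields two independent $2\times 2$ systems of the form \eqref{eq:2by2} per sub-block for the zero modes. The non-zero propagating modes (which come in forward/backward pairs within each $\hplus$ or $\bar\hplus$ since $\mnplus$ is odd) are coupled through $V_{11}$ and $V_{22}$ into $2\times 2$ systems as in the TRS-breaking case.

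Next I would split into two cases. When $\eta = 0$, every propagating mode lies in some $2\times 2$ system; the localization analysis of section \ref{sec:2by2} gives an exponentially decaying transmission eigenvalue for each system as $L \to \infty$, so $\mathrm{Tr}(T_+^* T_+)(L) \to 0 = \Mtop \bmod 2$. When $\eta = 1$, the leftover sub-block contains one $\hplus$ and one $\bar\hplus$ that decouple entirely (since both $V_{13}$ and $V_{24}$ must vanish by TRS). In the leftover $\hplus$, choosing $V_{11}$ as in the TRS-breaking setup places its forward zero mode either in a trivial $1\times 1$ decoupled system (when $\mnplus = 1$, yielding perfect transmission) or in a $3\times 3$ system with a pair of non-zero modes, for which the analysis of section \ref{sec:3by3} gives an asymptotic reflection matrix $R_+ = (r_1, r_2)$ with $|r_1|^2 + |r_2|^2 \to 1$, leaving exactly one unit eigenvalue of $T_+^* T_+$. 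The remaining non-zero mode pairs in the $\hplus$ block form $2\times 2$ systems and localize. The leftover $\bar\hplus$ contributes symmetrically to $T_-^* T_-$ but not to $T_+^* T_+$. Summing over all sub-blocks yields $\mathrm{Tr}(T_+^* T_+)(L) \to 1 = \Mtop \bmod 2$, with the reflection statement for the zero mode in the $3\times 3$ leftover identical to the one in Theorem \ref{thm:diff}.

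The main technical subtlety lies in the fact that within a paired sub-block, the two $2\times 2$ zero-mode systems generated by $V_{14}$ are driven by the same scalar field $\tV(y/\eps)$ and their coupling strengths are linked by the TRS constraint (the $(2,3)$ entry is $-\bar V_{14}$ while the $(1,4)$ entry is $V_{14}^*$). However, the diffusive limits depend only on $|W_{mn}|^2 = \gamma_{mn}$, and the trace $\mathrm{Tr}(T_+^* T_+)$ is additive over the sub-block contributions regardless of any correlation between them through the shared field, since each system individually localizes and contributes zero to the limiting trace. Thus no new analytical input beyond the framework of Theorem \ref{thm:diff} is required. The only genuinely new ingredient is the counting argument identifying $\eta = \Mtop \bmod 2$ as the exact number of forward zero modes that cannot be paired under the TRS constraint, which is what the block decomposition above confirms.
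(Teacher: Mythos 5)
Your proposal is correct and follows essentially the same route as the paper: decompose $\Mtop=2p+\eta$, use the TRS-allowed block $V_{14}$ to pair zero modes across distinct $\hplus$/$\bar\hplus$ copies into localizing $2\times2$ systems, and reduce the leftover $\eta$-block to the $1\times1$ or $3\times3$ analysis already carried out for Theorem \ref{thm:diff}. Your remark that the shared random field and the sign-linked coupling constants do not affect the additive limiting trace is a point the paper only notes in passing, but it is consistent with the paper's observation that the sign "has no influence on the final result."
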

The proof of the theorem is the same as that of Theorem \ref{thm:diff} as it involves the same $3\times3$ and $2\times2$ coupled systems. We now turn to the description of the diffusion approximation and a proof of the above theorems.

\subsection{Diffusion equations for transfer matrices}

Let us recall the $\eps-$dependent equation \eqref{eq:Veps} for the propagator
\[
  \Lambda P_\eps' + i {\cal V}_\eps P_\eps =0
\]
where we use the notation $\Lambda=\epsilon^{-1}=\epsilon$ with diagonal entries $\lambda_k$. According to the choice of coefficients $W_{mn}$ described in the preceding section, the above equation takes the form of a block diagonal equation with blocks of size $2\times2$ or $3\times3$. More generally at this level of the analysis, the transfer matrix $P_\eps(y)$ is an $\mn\times\mn$ matrix.

%

The objective of the diffusion approximation is to find the asymptotic law as $\eps\to0$ of the variables $P=Q+iR$ (both the real part $Q$ and imaginary part $R$). 
The choice of scaling presented in the preceding section 
\begin{displaymath}
  \mV_{\eps mn}(y) = \tV_\eps(y) e^{-i\frac{\zeta_{m}-\zeta_n}{\sqrt\eps} y} W_{mn}=: \frac{1}{\sqrt\eps} \tV(\frac y\eps) M_{mn}(\frac y{\sqrt\eps}).
\end{displaymath}
The multiplier $M=(M_{mn})_{m,n}$ therefore now encodes the effects of the phases $\zeta_m$ and the choice of coupling coefficients $W_{mn}$. Let $U$ be a column vector of all the variables in $Q$ and $R$. 

The diffusion operator (infinitesimal generator) in the variables $U$ is then given by \cite[comments below Theorem 6.4]{FGPS-07}
\begin{displaymath}
  \mL = \lim\limits_{T\to\infty} \dfrac1T\dint_0^T\dint_0^\infty \E [\tV0) \tilde M(\tau)U\cdot\nabla_U][\tV(y) \tilde M(\tau) U\cdot\nabla_U] dy d\tau.
\end{displaymath}
This non-trivial result is one of the main places where we use the theory explained in great detail in \cite[Chapter 6\&7]{FGPS-07}.
Here $\tilde M$ is the real-valued matrix defined such $\tilde M U$ is the column representing the real and imaginary components of the complex object $MP$. 

The above operator $\mL$ admits the following interpretation. Let $\pi(y,U',U)dU$ be the probability that $U'$ at $y=0$ becomes $U$ within $dU$ at $y>0$. Thus, $\pi(0,U',U)=\delta(U-U')$. Then we obtain  (see, e.g., \cite[Chapter 6]{FGPS-07}) that $\partial_y\pi=\mL\pi$ as a function of $(y,U')$ for a fixed $U$ while $\partial_y\pi=\mL^*\pi$ as a function of $(y,U)$ for a given initial condition $U'$ at $y=0$, where $\mL^*$ is the formal adjoint to $\mL$. These are the forward and backward Kolmogorov equations, respectively, of the limit as $\eps\to0$ of the process $P_\eps$ solution of \eqref{eq:Veps}. They indicate, in the limit $\eps\to0$, how the law of the coefficients of the propagator $P$ vary with $L$. It is in this limit that we analyze the scattering coefficients described in Theorems \ref{thm:diff} and \ref{thm:diffTRS}.

We assume that all the coefficients $\zeta_{mn}=\zeta_m-\zeta_n$ are incommensurate in order to avoid coupling between the frequencies.  Rather than working with the variables $U=(Q,R)$, we introduce the change of variables $P=Q+iR$ and $\bar P=Q-iR$ as an equivalent basis to represent the complex object $P$; see \cite[Section 20.3]{FGPS-07}. We denote by $\hat R_0=\int_0^\infty \E \{\tV(0)\tV(y)\} dy$ the power of the random fluctuations. After some lengthy calculations similar to those in \cite{FGPS-07}, we obtain the following expression
\begin{equation}\label{eq:generator}
  \mL = \hat R_0 \bar X_0 X_0 + 2\hat R_0 \dsum_{k<l} |W_{kl}|^2 (\bar X_{kl}X_{kl}+X_{kl}\bar X_{kl})
\end{equation}
where 
\begin{displaymath}
  X_0 = \dsum_{k,l} \lambda_k W_{kk} (P_{kl}\partial_{P_{kl}} - \bar P_{kl}\partial_{\bar P_{kl}})
\end{displaymath}
and
\begin{displaymath}
  X_{kl} = \lambda_k Y_{kl} - \lambda_l \bar Y_{lk} ,\qquad Y_{kl} = \dsum_m P_{lm}\partial_{P_{km}}.
\end{displaymath}

We observe that $\mL$ is a second-order differential operator. We have $\frac12\mn(\mn-1)+1$ vector fields $X_0$ and $X_{kl}$ when $P$ is an $\mn\times \mn$ operator for $\mn^2$ (complex) scalar variables.  $\mL$ is therefore a (very) degenerate operator (nowhere elliptic). 

\subsection{Diffusion approximation for $3\times3$ systems}
\label{sec:3by3}

The topological phases described in earlier sections are characterized by two types of asymptotic results, one in which localization prevails and that can be demonstrated by a $2\times2$ system as has already appeared in the literature \cite{FGPS-07}, and another one in which localization is prevented by topological considerations. This can be demonstrated by a $3\times3$ system, whose theory is developed in the current section. 

The three modes in the TR symmetry breaking setting (or in a TR symmetry setting assuming that both blocks do not communicate) physically represent a protected mode $0$ with wavenumber $\zeta_0=E>0$ and two modes $\pm1$ with wavenumber $\pm\zeta_1=\pm\sqrt{E^2-\eps_1}$ in our simple block model. 

For concreteness \footnote{The general solution operator is $P(y',y)$ propagating $I$ at $y'$ to the solution at $y$. Then, $P^{-1}(y,y')=P(y',y)$ and we find equations for $P$ by differentiating either in the $y$ or $y'$ variables. The closed-form equation for $(R_+,T_+)$ or $(R_-,T_-)$ is obtained by differentiating on the side where reflection occurs.}, we assume that the modes $0$ and $1$ propagate from top to bottom and the mode $-1$ from bottom to top (as $y$ increases). The propagator $P(y)$ propagates $I$ on the bottow (at $y=0$) to $P(y)$ on the top of a slab of thickness $y$. We now introduce some relationships between the scattering coefficients in the scattering matrix $S$ and the coefficients in $P$. 

We thus obtain the following relations between $S$ and $P$:
\begin{displaymath}
  P\left(\begin{matrix}
T_+ & R_- \\  0 & I
\end{matrix}\right) = \left(\begin{matrix}
I & 0 \\ R_+ & T_-
\end{matrix}\right).
\end{displaymath}
Solving for $P$ yields 
\begin{displaymath}
  P = \left(\begin{matrix}
  T_+^{-1} &  -T_+^{-1}R_-\\  R_+T_+^{-1} &T_--R_+T_+^{-1}R_-
\end{matrix}\right).
\end{displaymath}
From the unitarity of the scattering matrix, we may verify that $T_+-R_-T_-^{-1} R_+=T_+^{-*}$ and $T_--R_+T_+^{-1}R_-=T_-^{-*}$ although we will not use these relations.
Since our primary objective is the analysis of $(T_+,R_+)$, and more precisely $R_+$, which provides how the $0$ and $1$ modes are reflected back into the $-1$ mode, we concentrate on the left columns of $P$ and define
\begin{displaymath}
  P_{3\times2} = \left(\begin{matrix}
  T_+^{-1} \\  R_+T_+^{-1}
\end{matrix}\right).
\end{displaymath}
The above algebraic manipulations are independent of dimension. In the $3\times3$ problem, the matrix $T_+$ is a $2\times2$ matrix while $R_+$ is a $1\times2$ (co-)vector. The matrix $P_{3\times2}$ satisfies the same equation as $P$. 

We write $P_{3\times2}=(P_{ij})_{1\leq i\leq3;\ 1\leq j\leq2}$ and apply $\mL$ to functions that depend only on those variables and their complex conjugates. Since $P_{3\times2}$ satisfies a closed form equation, we are guaranteed that $\mL$ applied to such functions will depend only on the $P_{ij}$ of the $3\times2$ system. Significant simplifications are in fact possible. We know that closed Riccati equations  can be obtained for the reflection coefficients \cite{FGPS-07}. This justifies looking for a diffusion equation that only involves (functions of) the reflection coefficients. This is done as follows.

With the above convention, we may write the explicit form of the transmission and reflection matrices:
\begin{displaymath}
  T_+ = \left(\begin{matrix}
  t_1 & t_2 \\ t_3 & t_4
\end{matrix}\right) = \left(\begin{matrix}
  t_+^{1\to1} & t_+^{0\to1} \\ t_+^{1\to0} & t_+^{0\to0} 
\end{matrix}\right),\qquad R_+ = \left(\begin{matrix}
  r_1 & r_2 
\end{matrix}\right) = \left(\begin{matrix}
r_+^{1\to-1} & r_+^{0\to-1}
\end{matrix}\right).
\end{displaymath}
With the above expression for $P_{3\times2}$ we solve to obtain
\begin{displaymath}
  T_+ = \frac{1}{\det_{12}} \left(\begin{matrix}
  P_{22} & -P_{12} \\ -P_{21} & P_{11}
\end{matrix}\right),\quad R_+ = \left(\begin{matrix}
  \dfrac{\det_{32}}{\dt_{12}} & \dfrac{\dt_{13}}{\det_{12}}
\end{matrix}\right),\quad \dt_{pq} = P_{p1}P_{q2}-P_{q1}P_{q2}.
\end{displaymath}

Perhaps somewhat surprisingly, it turns out that we can obtain a closed form diffusion in the variables $(d_1,d_2)$ with $d_j=d_j(P,\bar P)=|r_j|^2=r_j\bar r_j$. This will help us solve the $3\times3$ problem as well as the $2\times2$ problem simply by assuming that the $0$ mode does not couple to the $2\times2$ system composed of the $\pm1$ modes. 

We first verify
\begin{displaymath}
  X_0 d_j = X_0 r_j \ \bar r_j + r_j X_0 \bar r_j =0.
\end{displaymath}

So, the objective is to apply $X_{kl}$ and $\bar X_{kl} X_{kl}$ to functions of $(d_1,d_2)$ and realize that the coefficients depend on the coefficients $P_{ij}$ only via the coefficients $(d_1,d_2)$.  The details of the derivation are as follows. Recall that $Y_{pq}=P_q\cdot\nabla_{P_p}=P_{q1}\partial_{P_{p1}}+P_{q2}\partial_{P_{p2}}$ with $P_q$ the $q$th row of $P_{3\times2}$. We calculate
\begin{displaymath}
  Y_{qp} \dt_{mn} = \dt_{qn}\delta_{pm}+\dt_{mq}\delta_{pn},
\end{displaymath}
from which we deduce
\begin{displaymath}
\begin{array}{rcl	}
Y_{pq}r_1 &=& \dfrac{\dt_{q2}\delta_{p3}+\dt_{3q}\delta_{p2}}{\dt_{12}} - \dfrac{\dt_{32}}{(\dt_{12})^2}(\dt_{q2}\delta_{p1}+\dt_{1q}\delta_{p2})\\
Y_{pq}r_2 &=& \dfrac{\dt_{q3}\delta_{p1}+\dt_{1q}\delta_{p3}}{\dt_{12}} - \dfrac{\dt_{13}}{(\dt_{12})^2}(\dt_{q2}\delta_{p1}+\dt_{1q}\delta_{p2}).
\end{array} 
\end{displaymath}
With this, we find
\begin{displaymath}
\begin{array}{rcl	}
  Y_{12}r_1=0,\ Y_{21}r_1 = -r_2, \ Y_{12}r_2=-r_1,\ Y_{21} r_2 =0\\
  Y_{13}r_1=-r_1^2,\ Y_{31}r_1 = 1, \ Y_{13}r_2=-r_1r_2,\ Y_{31} r_2 =0\\
  Y_{23}r_1=-r_1r_2,\ Y_{32}r_1 = 0, \ Y_{23}r_2=-r_2^2,\ Y_{32} r_2 =1.
\end{array}   
\end{displaymath}
Now
\begin{displaymath}
  X_{pq} \phi (d_1,d_2) = \partial_1 \phi X_{pq} d_1 + \partial_2 \phi X_{pq} d_2
\end{displaymath}
so that 
\begin{displaymath}
 \bar X_{pq} X_{pq}  = |X_{pq}d_1|^2 \partial_1^2 + (\overline{X_{pq} d_2} X_{pq}d_1+c.c.) \partial^2_{12} +|X_{pq}d_2|^2 \partial_2^2 + [\bar X_{pq}X_{pq} d_1] \partial_1 + [\bar X_{pq}X_{pq}d_2] \partial_2.
\end{displaymath}

Recall that $X_{pq}=\lambda_pY_{pq}-\lambda_q\bar Y_{qp}$ with $\lambda_1=\lambda_2=-\lambda_3=1$. With the above expressions for $Y_{pq}$, we obtain after some algebra
\begin{displaymath}
  |X_{12}d_1|^2=|X_{12}d_2|^2=d_1d_2, \ \overline{X_{12}d_2} X_{12}d_1 = -d_1d_2,\ \bar X_{12}X_{12}d_1=-\bar X_{12}X_{12}d_2=d_2-d_1.
\end{displaymath}
For the second vector field, we find
\begin{displaymath}
  |X_{23}d_1|^2=d_1^2d_2,\ |X_{23}d_2|^2 = d_2(d_2-1)^2,\ \overline{X_{23}d_2} X_{23}d_1 =d_1d_2(d_2-1)
\end{displaymath}
while the drift terms are given by
\begin{displaymath}
    \bar X_{23}X_{23} d_1 = d_1(d_2-1),\quad \bar X_{23} X_{23} d_2 = (d_2-1)^2.
\end{displaymath}
The third vector field $X_{13}$ is obtained by symmetry considerations with $d_1$ and $d_2$ exchanged compared to the above results for $X_{23}$. This is logical as reflection from  $0$ to $-1$ is physically similar to reflection from $1$ to $-1$. Note that since all the above quantities are real-valued, we find that $X_{kl}\bar X_{kl}\phi=\bar X_{kl} X_{kl}\phi$.

Let us define $\gamma_{kl}=|W_{kl}|^2$ and assume that $4\hat R_0=1$ or absorb that latter constant into $\gamma_{kl}$. With this, we just found that applied to functions $\phi(d_1,d_2)$, the infinitesimal generator in \eqref{eq:generator} reduced to
\begin{equation}\label{eq:gen3}
\begin{array}{rcl}
  \mL &=& \gamma_{12} \Big[ d_1d_2 \left(\begin{matrix}
  1&-1\\-1&1 
\end{matrix}\right): \nabla^2 + (d_2-d_1) \left(\begin{matrix}
 1\\-1
\end{matrix}\right)\cdot\nabla\Big]\\
    &+& \gamma_{23} \Big[ \left(\begin{matrix}
  d_1^2d_2 & d_1d_2(d_2-1) \\ d_1d_2(d_2-1) & d_2(d_2-1)^2
\end{matrix}\right): \nabla^2 + \left(\begin{matrix}
  d_1(d_2-1) \\ (d_2-1)^2
\end{matrix}\right)\cdot\nabla\Big]\\
    &+& \gamma_{13} \Big[ \left(\begin{matrix}
  d_1(d_1-1)^2 & d_1d_2(d_1-1) \\ d_1d_2(d_1-1) & d_2^2d_1
\end{matrix}\right): \nabla^2 + \left(\begin{matrix}
   (d_1-1)^2\\ d_2(d_1-1)
\end{matrix}\right)\cdot\nabla\Big].
\end{array}    
\end{equation}
We may recast the above expression as 
\begin{displaymath}
  \begin{array}{rcl}
 \mL= \gamma_{12}[d_1d_2 \varphi_3\otimes\varphi_3 : \nabla^2 + (d_2-d_1) \varphi_3\cdot\nabla]
 + \gamma_{13}[d_2\varphi_2\otimes\varphi_2 : \nabla^2 + (d_2-1) \varphi_2\cdot\nabla]
 \\ + \gamma_{23}[d_1 \varphi_1\otimes\varphi_1 : \nabla^2 + (d_1-1)\varphi_1\cdot\nabla]
 \\
  \varphi_3=\left(\begin{matrix}
1\\-1
\end{matrix}\right) ,\qquad
\varphi_2=\left(\begin{matrix}
     d_1 \\ d_2-1
\end{matrix}\right),\qquad
\varphi_1 =\left(\begin{matrix}
  d_1-1 \\ d_2
\end{matrix}\right).
\end{array}
\end{displaymath}
So, the generator may be expressed as the sum of three one-dimensional diffusions; $\gamma_{12}$ in the direction parallel to the edge $d_1+d_2=1$, $\gamma_{13}$ in the direction toward the point $(0,1)$, and $\gamma_{23}$ in the direction toward the point $(1,0)$. The combination provides a net drift in the direction $d_1+d_2=1$, which is never attained but converged to exponentially as we shall see below in a simplified setting. 

Let us first define the domain of definition for such an equation. We deduce from the unitarity of the scattering matrix that 
\begin{displaymath}
  |r_+^{1\to-1}|^2 + |r_+^{0\to-1}|^2+|t_-^{-1\to-1}|^2=1
\end{displaymath}
as a consequence of current conservation. As a result, $d_1+d_2+|t_-^{-1\to-1}|^2=1$ and the domain of definition of the diffusion is $0\leq d_1\leq 1$, $0\leq d_2\leq 1$, as well as $d_1+d_2\leq 1$. In other words, the two-dimension diffusion lives on an open triangle $T$ in $(d_1,d_2)$ space.

Each of the above three terms in \eqref{eq:gen3} is degenerate as a one-dimensional diffusion. However, the three terms combined ensure that $\mL$ is elliptic inside $T$ (with positive definite diffusion tensor at each point inside $T$). The equation nonetheless remains degenerate as the diffusion coefficients all vanish at the boundary of $T$. The rate at which they approach $0$ depends on the point at the boundary $\partial T$. In the vicinity of the two boundary segments defined by $d_1=0$ and $d_2=0$, we observe that the diffusion coefficients converge linearly to $0$ while the drift terms converge to a vector field pointing toward the inside of $T$. In the vicinity of the boundary segment $d_1+d_2=1$, the drift term vanishes linearly while the diffusion coefficients vanish quadratically. Heuristically, this implies that the diffusion may reach but is repelled from the former two segments while it converges asymptotically to the latter segment (as ``time" $L$ increases) while never reaching it. Neglecting transverse diffusion, the former segments are modeled locally by $x\partial^2_x+b\partial_x$ for $b>0$, where $x=0$ is the boundary point, while the latter segments are modeled locally by $x^2\partial^2_x+\alpha x\partial_x$ with $\alpha\in\Rm$ any constant. The standard change of variables $x=e^y$ shows that the point $x=0$ corresponds to the point $y=-\infty$ in a standard heat equation, and hence is never attained. 

Theories for equations of the form \eqref{eq:gen3} follow after some modifications of theories available in the literature. A complete theory for linearly vanishing coefficients in the vicinity of the boundary was developed in \cite{EM-PUP-13} and its references with applications primarily in biology. One dimensional versions of \eqref{eq:gen3} are treated in \cite[Chapter 7]{FGPS-07}. Here, although this is not absolutely necessary, we will consider the simplified setting where $\gamma=\gamma_{13}=\gamma_{23}$. Then, we observe that $\mL$ applied to functions $\phi(d_1+d_2)$ is of the form
\begin{displaymath}
  \mL \phi(d_1+d_2)  = \gamma \big[[(d_1+d_2)(1-d_1-d_2)^2 \partial^2 +(2-d_1-d_2)(1-d_1-d_2)\partial]\phi(d_1+d_2)\big].
\end{displaymath}
In other words, we are fortunate enough to obtain a one dimensional diffusion for the variable $\rho=d_1+d_2$ with
the diffusion operator, with $\gamma$ set to $1$ to simplify
\begin{displaymath}
  \mL_\rho = \rho(1-\rho)^2\partial^2_\rho + (1-\rho)(2-\rho)\partial_\rho.
\end{displaymath}
The above equation is similar to that in \cite[Chapter 7]{FGPS-07} (constructed for the variable $\tau=1-\rho$) with a simpler analysis. As in \cite[Chapter 7]{FGPS-07}, we observe that the law of the diffusion $\rho$ converges exponentially to $1$ (in the $L^1$ sense) as ``time" $L$ increases.

   More precisely, let $\pi_\rho$ be the solution of $\partial_L \pi_\rho = \mL_\rho^* \pi_\rho$ with initial condition $\pi_\rho(L=0,\rho)=\delta_{0^+}(\rho)$. Then $\pi_\rho$ converges to a delta function at $\rho=1$ exponentially rapidly as $L$ increases. Indeed, let $0\leq m_0(L)=\E(1-\rho)=\E|1-\rho|=\int_0^1(1-\rho)\pi_\rho(\rho)d\rho$. We find
\begin{displaymath}
    \partial_L m_0 = \int_0^1 \pi_\rho \mL(1-\rho)  d\rho= \int_0^1 (1-\rho)(\rho-2)\pi_\rho d\rho \leq -m_0(z)
\end{displaymath}
since $2-\rho\geq1$. This shows that $m_0=\E|1-\rho|$ is either $0$ or decays exponentially faster that $e^{-L}$. This shows that $\pi_\rho(L,\rho)$ converges in distribution exponentially to the atom $\delta_1(\rho)$ in any reasonable metric.

Armed with this information,  we return to the analysis of the diffusion in \eqref{eq:gen3} with $\gamma_{13}=\gamma_{23}$ to observe that $d_1+d_2$ converges to $1$ exponentially quickly. For large ``times" $L$, we may therefore approximate the law $\pi(L,d_1,d_2)$ solution of $\partial_L \pi = \mL^*\pi$ as $\pi(L,d_1,d_2)\approx \pi_{d_1}(L,d_1)\frac1{\sqrt2}\delta(d_1+d_2-1)$, where we find that the reduced  density $\pi_{d_1}$ is normalized to $1$ and solves the reduced equation
\begin{displaymath}
  \partial_L \pi_{d_1} = \mL^*_{d_1} \pi_{d_1}, \qquad 
   \mL_{d_1} = (\gamma_{12}+\gamma) \partial_{d_1} \big[d_1(1-d_1) \partial_{d_1} \big].
\end{displaymath}
This is a degenerate diffusion equation in the variable $d_1$ posed on the interval $(0,1)$ of the type analyzed in \cite{EM-PUP-13}. Note that the diffusion coefficient converges linearly to $0$ as $d_1$ approaches either $0$ or $1$. We easily observe that there is an invariant measure (unique thanks to the results of \cite{EM-PUP-13}) solution of $\mL_{d_1}^*\pi_\infty=0$ and given by $\pi_\infty=1$ on $(0,1)$.

We summarize the above derivation as 
\begin{theorem}
\label{thm:gen3}
Let $\pi(L,d_1,d_2)$ be the solution of $\partial_L \pi= \mL^*\pi$ with initial conditions $\pi(0)=\delta_{0+}(d_1)\delta_{0+}(d_2)$. We assume that $\gamma_{13}=\gamma_{23}>0$.\footnote{The result of the theorem generalizes to the setting where $\gamma_{13}>0$ and $\gamma_{23}>0$ are not necessarily equal. These coefficients are in fact not expected to be equal since the $0$ and $\pm1$ modes have different currents $j_k$. We briefly sketch the proof of the generalization. We verify that the variable $m_0(L)=\E(1-d_1-d_2)=\E|1-d_2-d_2|$ also decays exponentially following the same proof as for the variable $\rho$ above. The rate of convergence depends on the coefficients $\gamma_{ij}$. This allows us to approximate $\pi(L,d_1,d_2)$ by $\pi_{d_1}(L,d_1)\frac 1{\sqrt2}\delta(d_1+d_2-1)$ as above. It remains to analyze the equation for $\pi_{d_1}(L,d_1)$, which is a degenerate diffusion equation treated in \cite{EM-PUP-13} and show that $\pi_{d_1}(L,d_1)$ converges to an invariant measure, which is no longer uniform on $(0,1)$ when $\gamma_{13}\not=\gamma_{23}$.}
Then as $L\to\infty$, the above probability distribution converges in distribution to the invariant measure $\frac1{\sqrt2}\delta(d_1+d_2-1)$.
\end{theorem}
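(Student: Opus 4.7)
The plan is to follow the ``separation of scales'' argument already hinted at in the paragraphs preceding the statement, promoting each heuristic step to a rigorous one. My strategy has three stages: (i) reduce the 2D degenerate diffusion on the triangle $T=\{d_1,d_2\geq 0,\ d_1+d_2\leq 1\}$ to a 1D diffusion for the sum $\rho=d_1+d_2$; (ii) prove exponential convergence of the law of $\rho$ to the Dirac mass $\delta_1(\rho)$; (iii) on the limiting line $\rho=1$, identify the invariant measure of the residual dynamics for $d_1$ and show convergence to it.

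For step (i), I would apply $\mL$ (written in the factored form $\gamma_{12}[\,\cdots\,]+\gamma[\,\cdots\,]+\gamma[\,\cdots\,]$, using $\gamma_{13}=\gamma_{23}=\gamma$) to a test function $\phi(\rho)$ of $\rho=d_1+d_2$ only. A direct computation using $\varphi_3\cdot\nabla\rho=0$ cancels the $\gamma_{12}$ contribution entirely, and the two remaining pieces combine to yield the closed 1D generator
\[
  \mL_\rho=\rho(1-\rho)^2\partial_\rho^2+(1-\rho)(2-\rho)\partial_\rho
\]
after normalizing $\gamma=1$. This shows that $\rho$ is autonomously Markov, so one can analyze it in isolation. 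The endpoint $\rho=0$ is repelling (linear diffusion, drift of order one into the interior) while $\rho=1$ is the natural attractor (quadratic diffusion, linear drift), and the Feller-type boundary classification says neither endpoint is attained in finite time.

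For step (ii), set $m_0(L)=\E|1-\rho(L)|=\int_0^1(1-\rho)\pi_\rho(L,\rho)\,d\rho$. Differentiating under the integral, integrating against $\mL_\rho^*$, and using the duality identity $\partial_L m_0=\int_0^1(\mL_\rho(1-\rho))\pi_\rho\,d\rho$ gives $\mL_\rho(1-\rho)=-(1-\rho)(2-\rho)$. Since $2-\rho\geq 1$ on $(0,1)$, one obtains $\partial_L m_0\leq -m_0$, hence $m_0(L)\leq m_0(0)e^{-L}$. This $L^1$ bound on $1-\rho$ entails weak convergence of $\pi_\rho(L,\cdot)$ to $\delta_1$ at exponential rate. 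I would then lift this to the 2D setting: for any smooth bounded test function $\phi(d_1,d_2)$, writing $\phi(d_1,d_2)=\phi(d_1,1-d_1)+(d_1+d_2-1)\psi(d_1,d_2)$ with $\psi$ bounded, the second piece integrates against $\pi(L)$ to something bounded by $m_0(L)$, which vanishes. This reduces everything to the behavior of the marginal $\pi_{d_1}(L,d_1)=\int \pi(L,d_1,d_2)\,dd_2$.

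For step (iii), I would compute $\mL$ applied to test functions $\phi(d_1)$ and restrict to the line $d_2=1-d_1$. On that line, the $\varphi_2$ and $\varphi_1$ directions become proportional to $\partial_{d_1}$ and their contributions combine with the $\gamma_{12}$ piece (which also acts non-trivially in the $d_1$ direction once restricted) to give the Wright--Fisher-type generator $\mL_{d_1}=(\gamma_{12}+\gamma)\partial_{d_1}[d_1(1-d_1)\partial_{d_1}]$. A direct check shows that the flat density $\pi_\infty=1$ on $(0,1)$ solves $\mL_{d_1}^*\pi_\infty=0$, and uniqueness plus $L^1$ convergence to the invariant measure is exactly the framework of \cite{EM-PUP-13} for linearly degenerate 1D diffusions on a bounded interval. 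Combining this with step (ii) gives the claimed limit $\tfrac{1}{\sqrt 2}\delta(d_1+d_2-1)$, the factor $\tfrac{1}{\sqrt 2}$ being just the line-element normalization so that the marginal is uniform on $d_1\in(0,1)$.

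The main obstacle I anticipate is making the separation-of-scales argument fully rigorous: the simple $L^1$ estimate on $1-\rho$ implies weak convergence to the line $\rho=1$, but deducing convergence of the \emph{conditional} law of $d_1$ given $\rho$ to the invariant measure of $\mL_{d_1}$ requires checking that the fast variable $\rho$ equilibrates faster than the slow variable $d_1$ uniformly in the starting point, and handling the degeneracy at the endpoints $d_1=0,1$ where the diffusion vanishes linearly. This is precisely the regime covered by the degenerate-diffusion theory of \cite{EM-PUP-13}, so the proof reduces to citing the appropriate existence/uniqueness/convergence-to-equilibrium result there and matching boundary data; the remark following the theorem indicates that the unequal-$\gamma_{13}\neq\gamma_{23}$ case follows the same scheme with a non-uniform invariant measure.
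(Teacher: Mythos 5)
Your proposal follows essentially the same route as the paper: the closed one-dimensional generator $\mL_\rho=\rho(1-\rho)^2\partial_\rho^2+(1-\rho)(2-\rho)\partial_\rho$ for $\rho=d_1+d_2$, the moment estimate $\partial_L\E(1-\rho)\leq-\E(1-\rho)$ giving exponential concentration on the line $d_1+d_2=1$, and the reduced Wright--Fisher generator $(\gamma_{12}+\gamma)\partial_{d_1}[d_1(1-d_1)\partial_{d_1}]$ with uniform invariant measure, appealing to \cite{EM-PUP-13} for the degenerate-boundary theory. The test-function decomposition $\phi(d_1,d_2)=\phi(d_1,1-d_1)+(d_1+d_2-1)\psi$ you use to lift the $L^1$ bound to weak convergence is a slightly more explicit rendering of a step the paper leaves implicit, and you correctly flag the same remaining technical point (rigorous equilibration of the conditional law of $d_1$) that the paper itself defers.
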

{\bf Proof.} The Green's function $\pi(L,d_1,d_2)$ is defined and smooth for positive $L$ as in regular potential theory and as in the potential theory developed in \cite{EM-PUP-13} for degenerate coefficients at the domain's boundary. The details of adaptation of these theories  to the equation with a different degeneracy at the interface $d_1+d_2=1$ are not done in detail here. Then, $d_1+d_2$ satisfies an independent one dimensional diffusion equation with $d_1+d_2$ converging to $1$ as $L\to\infty$ as we demonstrated above. This shows that $\pi(L,d_1,d_2)$ also concentrates to the vicinity of $d_1+d_2=1$. In the $d_1$ variable, we obtained in the above derivation that the unique invariant measure was given by a constant distribution, which yields the theorem. \endproof

\medskip

The above result shows that for $L$ large (or equivalently for $\gamma$ large by rescaling), the reflection coefficients  $|r_+^{1\to-1}|^2$ and $|r_+^{0\to-1}|^2$ diffuse between the values of $0$ and $1$ and by definition of the invariant measure, come close to any value between $0$ and $1$ as $L$ progresses. This concludes our proof of Theorems \ref{thm:diff} and \ref{thm:diffTRS} in the topologically non-trivial cases.

\subsection{Case of $2\times2$ systems}
\label{sec:2by2}

We can use the calculations of the previous section to revisit standard results of localization in the presence of two modes $\pm1$. It suffices to neglect the mode $0$ in the preceding calculations and assume that it does not couple with the other modes. Only the quantity $d_1=|r_+^{1\to-1}|^2$ then matters and we know from scattering that
\begin{displaymath}
  d_1 + |t_-^{-1\to-1}|^2 = d_1 + |t_+^{1\to1}|^2 =1.
\end{displaymath}
So, proving that $d_1$ converges to $1$ exponentially as the thickness of the slab increases shows that modes localize: transmission is exponentially small as a function of thickness $L$. In the above equation, this corresponds to looking at $\gamma_{13}$ the only non vanishing coefficient and considering function $\phi=\phi(d_1)$.  We then find the diffusion
\begin{displaymath}
  \mL = \gamma_{13} [d_1(d_1-1)^2 \partial^2 + (1-d_1)^2\partial].
\end{displaymath}
Note that the drift term $(1-d_1)^2$ always pushes the diffusion towards $d_1=1$. However, since both diffusion and drift decay to $0$ quadratically at $d_1=1$, the limit is never attained.
Let $\tau=1-d_1$ the transmission coefficient. In this variable, we find
\begin{displaymath}
  \mL_\tau = \gamma_{13} [\tau^2(1-\tau)\partial^2_\tau - \tau^2 \partial_\tau].
\end{displaymath}
This is the operator describing transmission in \cite[Chapter 7]{FGPS-07}. We know that $\tau$ decays exponentially to $0$, or more precisely that $\pi(\tau)$ its law concentrates exponentially rapidly to the vicinity of $\tau=0$. This is a signature of the localization of waves in random slabs.

This concludes our proof of Theorems \ref{thm:diff} and \ref{thm:diffTRS} in the topologically trivial cases. 


\section{Conclusions}

This paper introduces a class of Hamiltonians in \eqref{eq:edgeH} modeling the low frequency components (for energies close to the Fermi energy) of general edge states at the interface of two-dimensional materials in different topological phases. After appropriate regularization, these Hamiltonians are classified in Theorem \ref{thm:topo} as Fredholm operators based on their index $\ind H_v=\Mtop-\Ntop$ given by the difference of zero modes propagating with positive and negative velocities along the edge, respectively.

In the presence of fermionic time reversal symmetry (TRS), $\Mtop=\Ntop$ and the above index is trivial. Another index given by $\ind_2 H_v=\Mtop$ mod $2$ separates edge Hamiltonians into two classes as described in Theorem \ref{thm:topoTRS}.

\medskip

The Hamiltonians are defined on a open domain $\Rm^2$ and are not required to satisfy any translational invariance. They are therefore amenable to perturbations by a large class of random fluctuations. The spectral decomposition of the unperturbed Hamiltonian $H_0$ in section \ref{sec:scat} allows one to develop a scattering theory for the propagating modes of $H$. In this paper, the random fluctuations are modeled by a specific operator $V$ so that the evanescent modes do not couple with the propagating ones. 

Under this assumption, we were able to asses the influence of the topology of $H_v$ on the scattering matrix. More specifically, we show that the transmission (conductance) ${\rm Tr}T_+^*T_+$ is bounded from below by $\Mtop-\Ntop$, and, for a certain choice of mode couplings, is asymptotically equal to that value in the limit of strong random fluctuations. Transmission, and hence the absence of Anderson localization, is one of the hallmarks of non trivial edge Hamiltonians. We also obtain that backscattering is present for energies $E^2$ above a certain threshold (equal to $\eps_1$ in our model). Only for specific, random, linear combinations of the propagating modes do we observe a total absence of backscattering; see Theorem \ref{thm:scatTRB}.

For TRS Hamiltonians, the same scattering picture emerges with $\ind H_v=\Mtop-\Ntop$ replaced by the $\Zm_2$ index $\ind_2 H_v=\Mtop$ mod $2$. We obtain  ${\rm Tr}T_+^*T_+$ and ${\rm Tr}T_-^*T_-$ are bounded from below by $\Mtop$ mod $2$ and when the latter is non-trivial, obtain the existence of random linear combinations of propagating modes (one for each direction of propagation) such that no-backscattering occurs. 

\medskip

The model for the scattering amplitudes takes the form of a system of one dimensional ordinary differential equations obeying a current conservation. The macroscopic limit for the influence of highly oscillatory random fluctuations is then well described by a diffusion equation. We generalized known derivations of diffusion equations to the specific $3\times3$ systems that naturally appear when zero modes are coupled with a pair of other propagating modes. This allowed us to obtain the following result (for a specific choice of random fluctuations): In the high scattering regime, only $\ind H_v$ (in the general case) or $\ind_2 H_v$ (in the TRS case) modes propagate without any back-scattering. All other modes are localized as in standard Anderson localization in the sense that their transmission decays exponentially with the thickness of the random slab (or equivalently with the strength of the random fluctuations).

\section*{Acknowledgments} This work was partially supported by the National Science Foundation and the Office of Naval Research. The author would like to acknowledge multiple stimulating discussions with a number of researchers including Gian Michele Graf, Harish Krishnaswamy, Mikael Rechtsman, Hermann Schulz-Baldes, and Michael Weinstein.

%
%
%
%
%
%
%
%
%
%
%


\end{document}